\pdfoutput=1 

\documentclass[12pt]{article}
\usepackage{amsmath}
\usepackage{graphicx}
\usepackage{enumerate}
\usepackage{url} 
\usepackage{float}
\usepackage[utf8]{inputenc} 
\usepackage[T1]{fontenc}    
\usepackage{hyperref}       
\usepackage{url}            
\usepackage{booktabs}       
\usepackage{amsfonts}       
\usepackage{nicefrac}       
\usepackage{microtype}      
\usepackage{lipsum,  amsthm}
\usepackage[ruled,vlined]{algorithm2e}
\usepackage{wrapfig}
\usepackage{graphicx}
\usepackage{authblk}
\usepackage[backend=biber,style=numeric]{biblatex}
\addbibresource{ms.bib}
\usepackage{blindtext}
\usepackage{caption}

\setlength\parindent{0pt}

\newtheorem{thm}{Theorem}
\newtheorem{lemma}{Lemma}
\newtheorem*{lemma*}{Lemma}
\newtheorem*{thm*}{Theorem}

\DeclareMathOperator*{\argmin}{arg\,min}

\newcommand{\blind}{1}

\addtolength{\oddsidemargin}{-.5in}%
\addtolength{\evensidemargin}{-.5in}%
\addtolength{\textwidth}{1in}%
\addtolength{\textheight}{-.3in}%
\addtolength{\topmargin}{-.8in}%

\pdfminorversion=4 

\begin{document}

\def\spacingset#1{\renewcommand{\baselinestretch}%
{#1}\small\normalsize} \spacingset{1}


\title{\bf Gradient-Based Markov Chain Monte Carlo for Bayesian Inference With Non-Differentiable Priors}
  
\if1\blind
{
\author[1]{Jacob Vorstrup Goldman$^*$}
\author[2]{Torben Sell$^+$\thanks{Joint first authors. JVG acknowledges financial support from an EPSRC Doctoral Training Award. TS acknowledges financial support from the Cantab Capital Institute for the Mathematics of Information.\\+: Corresponding author, e-mail: \href{torben.sell@cantab.net}{torben.sell@cantab.net}\\This manuscript has been accepted for publication by the Journal of the American Statistical Association, \href{https://www.tandfonline.com/toc/uasa20/current}{https://www.tandfonline.com/toc/uasa20/current}.}}
\author[1]{Sumeetpal Sidhu Singh}
    
    \affil[1]{Signal Processing and Communications Laboratory,
    Department of Engineering, University of Cambridge, UK}
    \affil[2]{    Department of Pure Mathematics and Mathematical Statistics, University of Cambridge, UK}
  \maketitle
} \fi

\if0\blind
{
  \author{}
  \maketitle
  \bigskip
} \fi

\begin{abstract}
The use of non-differentiable priors in Bayesian statistics has become increasingly popular, in particular in Bayesian imaging analysis. Current state of the art methods are approximate in the sense that they replace the posterior with a smooth approximation via Moreau-Yosida envelopes, and apply gradient-based discretized diffusions to sample from the resulting distribution. We characterize the error of the Moreau-Yosida approximation and propose a novel implementation using underdamped Langevin dynamics. In misson-critical cases, however, replacing the posterior with an approximation may not be a viable option. Instead, we show that Piecewise-Deterministic Markov Processes (PDMP) can be utilized for exact posterior inference from distributions satisfying almost everywhere differentiability. Furthermore, in contrast with diffusion-based methods, the suggested PDMP-based samplers place no assumptions on the prior shape, nor require access to a computationally cheap proximal operator, and consequently have a much broader scope of application. Through detailed numerical examples, including a non-differentiable circular distribution and a non-convex genomics model, we elucidate the relative strengths of these sampling methods on problems of moderate to high dimensions, underlining the benefits of PDMP-based methods when accurate sampling is decisive. 
\end{abstract}

\noindent%
{\it Keywords:}  Proximal operators, piece-wise deterministic Markov processes, Markov chain Monte Carlo, Bayesian imaging

\spacingset{1.45} 

\section{Introduction}
Estimating and quantifying the uncertainty of the parameters in a Bayesian statistical model often involves intractable, high-dimensional integrals. One of the most widely applied methods to estimate these integrals is Markov Chain Monte Carlo (MCMC), which involves simulating a Markov chain that has the posterior distribution as its invariant, and subsequently estimating quantities of interest from the resulting trajectory. More recently, algorithms that operate directly in continuous-time have attracted significant attention \cite{bierkens2019zig, bouchard2018bouncy}. These samplers are irreversible, which has been shown to increase mixing speeds \cite{neal2004improving}, and allow for exact subsampling of data by exploiting the factor structure of product-likelihoods, completely avoiding the bias of stochastic gradients; this subsampling operation can furthermore be done at constant cost in common cases \cite{bierkens2019zig}. This paper discusses continuous-time algorithms for simulating values, or sampling, from the posterior of Bayesian problems with only almost everywhere differentiable posteriors.

This broad class of distributions includes all log-concave posteriors, as well as posteriors that arise from a log-concave and non-differentiable prior, and a differentiable likelihood. This latter class of distributions have a long history in convex optimization, where the non-differentiable priors are used to ensure existence and regularity of solutions \cite{boyd2004convex}. The most widespread uses of non-differentiable priors have been in the image analysis literature, for example in denoising \cite{rudin1992nonlinear, chambolle2010introduction}, deblurring \cite{babacan2008variational, beck2009fast}, multiframe super resolution \cite{farsiu2004fast} and compressed sensing \cite{candes2006robust, babacan2009bayesian}. Outside of image analysis, the Laplace, or double exponential, prior is used in Lasso regression and its Bayesian counterpart \cite{tibshirani1996regression, park2008bayesian}, in sparse regularization via the Bernoulli-Laplace prior \cite{chaari2013sparse}, and in source localization \cite{fernandez2004circular}. More sophisticated scale-mixture priors are introduced in \cite{carvalho2010horseshoe, bhattacharya2015dirichlet, hosseini2019two}. For low-rank matrix completion, the nuclear-norm prior over the singular values is a convex envelope of the rank function, and non-differentiable \cite{babacan2012sparse, koltchinskii2011nuclear}. 
To be able to utilize gradient-based methods for these kind of problems, the convex optimization community utilizes the proximal operator and the related Moreau-Yosida envelope (e.g., \cite{nitanda2014stochastic, bauschke2011convex}, see \cite{parikh2014proximal} for a review) to smoothen the convex prior. The resulting regularized prior function is known as the Moureau-Yosida envelope (MYE) prior, and is everywhere continuously differentiable, in particular, the gradient is just a linear function of the proximal operator. The closeness of the MYE to the true underlying function is determined by a single \textit{envelope tightness parameter} $\lambda$, and this parameter plays a key role in what follows. We will generically denote a posterior density that incorporates a MYE prior as $\pi^\lambda.$ 

Recently, MCMC algorithms and computational resources have matured to a point where it is now feasible to carry out full posterior inference for high-dimensional instances of these models, rather than just maximum a posteriori (MAP) estimation as is done in optimization. Since its introduction to the sampling literature in the seminal paper of \cite{pereyra2016proximal}, the proximal operator and the MYE have been used to propose new gradient-based sampling methods for non-differentiable distributions \cite{durmus2018efficient, chaari2016hamiltonian, salim2019stochastic}. The use of these operators in sampling algorithms is, however, not without complications. As the gradient of the MYE is $1/\lambda$-Lipschitz continuous \cite[Proposition 3.1]{durmus2018efficient} and simultaneously the accuracy of the posterior approximation is increased by decreasing $\lambda$, the step-sizes one may take in the algorithms of \cite{pereyra2016proximal} or \cite{salim2019stochastic} need to scale at order $O(\lambda)$ to ensure that the invariant distribution of the diffusion process is close to the target $\pi^\lambda$. The unadjusted diffusions also introduce bias from both their discretization and the use of the MYE to approximate the target density, yet it is only possible to adjust for the bias via Metropolis-Hastings corrections if the proposal density is available in closed form. This correction also comes at the cost of slower mixing. In mission-critical cases, such as calculating the probability of disease via MRIs, see for example \cite{le1995blount}, or defect detection of large-scale civil infrastructure via visual sensing technologies \cite{koch2015review}, it follows that end users will be interested in as exact sampling as possible to most accurately quantify the resulting uncertainty of estimates. Furthermore, Langevin-based methods are also less robust under heavy-tailed and weakly log-concave distributions in general \cite{livingstone2019robustness}, and these distributions are widely applied in both imaging and Bayesian statistics. In addition, the computational cost of calculating the proximal operator is also completely model dependent, with effective algorithms only existing in particular cases. Finally, the use of data subsampling via stochastic gradients \cite{welling2011bayesian, chen2014stochastic} has seen widespread adoption. There is, however, no theoretical understanding as of yet on the combination of stochastic gradients and MYE-based priors, as the two approximation methods induce competing sources of error. This limits the application of MYE-based methods to cases where the data is of manageable size, typically images, rather than the tall and wide data sets faced in the contemporary big data regime.

\subsection{Contributions}

The contributions of this paper are:
\begin{itemize}
    \item The availability of the gradient through the proximal operator has in particular allowed for the use of Langevin dynamics \cite{roberts1996exponential, roberts1998optimal}, where the gradient of the MYE of the log-posterior is applied as the drift-function in a discretized stochastic differential equation (SDE). If this SDE is run unadjusted, i.e. without a Metropolis-Hastings acceptance step, we show in Theorem \ref{thm_expectation_bounds} that one can asymptotically compute good approximations to the true expectations if the envelope tightness parameter $\lambda$ is chosen small enough. 
    \item A common thread for the Langevin-based algorithms presented in \cite{pereyra2016proximal, durmus2018efficient, salim2019stochastic} is that they all use first-order methods to discretize the Langevin SDE. As an extension to the existing literature, we show that the second-order unadjusted \emph{Underdamped} Langevin Dynamics can be used to target the smoothed posterior. Numerically, this can lead to better mixing \ref{ex:100D_laplace}. These dynamics require an extra momentum variable but come with provably faster convergence properties \cite{ma2019there} at limited extra computational effort. 
    \item We show that the recent class of continuous-time sampling algorithms \cite{bouchard2018bouncy, bierkens2019zig, bou2017randomized}, all based on Piecewise-deterministic Markov Processes (PDMP, originally introduced in \cite{davis1984piecewise}, for a thorough overview see \cite{jacobsen2006point}), can exactly sample from our class of non-differentiable posteriors by only using the gradient at points of differentiability. It is therefore unnecessary to calculate the potentially costly MYE or proximal operator, and the samples are asymptotically exact draws from the correct posterior $\pi$ rather than the approximate $\pi^\lambda$. 
    \item We provide numerical comparisons of these samplers in moderate to high-dimensional problems from Bayesian statistics and imaging, evaluate and compare the performance of the algorithms, and give clear recommendations for end users applying non-differentiable priors. Notably, we carry out exact Bayesian inference on a non-differentiable circular distribution, and in a non-convex genomics model where proximal operators are inapplicable, recovering significant gene expressions in the latter. In the Supplementary Material \ref{appendixB}, we also provide new insight on the performance of PDMP methods in a challenging Bayesian imaging model in high dimension, and a matrix denoising example using Hamiltonian dynamics \cite{vanetti2017piecewise}.
\end{itemize}

The paper is organized as follows: In Section \ref{sec_prox_MYE} we introduce the proximal operator and the Moreau-Yosida Envelope (MYE), as well as giving an error estimate of the smooth target. In Section \ref{sampling_section}, we review the existing Langevin-based methods for a MYE-smoothed posterior, introduce a new second-order method, and discuss how to use PDMPs to exactly sampling the target. Section \ref{examples} compares the algorithms on a variety of examples, and Section \ref{conclusion} concludes. Long proofs can be found in Supplementary Material \ref{appendixA}, additional numerical experiments in Supplementary Material \ref{appendixB}.

\subsection{Frequently Used Notation}
We consider a random variable of interest with range $x \in \mathcal X$ where typically $\mathcal X = \mathbb R^n$. We define the prior density $\pi_0(x)\in C^0_{\text{a.e.}}(\mathbb R^n,\mathbb R)$, where $C^0_{\text{a.e.}}$ is the set of a.e.-differentiable continuous functions, functions of this class will also be denoted non-differentiable. Given data $y \in \mathbb R^m$ for some $m > 1$ distributed conditionally on $x$ according to $p(y|x)$, let the likelihood be $x \mapsto\mathcal L(x) = p(y|x) \in C^1(\mathbb R^n,\mathbb R);$ furthermore define $\ell(x)=\log \mathcal L (x),$ the log-likelihood. Since data is fixed, we subdue it in the notation from now on. By Bayes' theorem, the posterior distribution of $x$ given data is then $\pi(x) \propto \mathcal L(x) \pi_0(x) \in C^0_{\text{a.e.}}(\mathbb R^n,\mathbb R)$. It will be convenient to work entirely in log-space, we therefore define the \emph{posterior potential} up to a constant as $U(x) \equiv -\log \pi(x) =-\ell(x) - \log \pi_0(x)$. By $\Vert \cdot \Vert$ we denote the Euclidean $\ell^2$ norm, by $\lVert \cdot\rVert_1$ the $\ell^1$-Norm. The constant $\lambda$ will always indicate the precision of the Moreau-Yosida envelope. The domain of a function $g$ is denoted $\textbf{dom}(g)$.

\section{Proximal Operators and the Moreau-Yosida Envelope\label{sec_prox_MYE}}

\subsection{Proximal Operators}
We will now describe, and collect a few useful results about, the proximal operator. These allow to simplify the gradient evaluation of the approximations to non-differentiable posteriors discussed in the next subsection. The proximal operator $\text{prox}_g^\lambda$ is defined for any convex function $g: \mathbb{R}^n \rightarrow \mathbb{R}$ and parameter $\lambda$ as the optimization problem
\begin{align}
\text{prox}_g^\lambda (x) = \underset{u}{\argmin} \Big \{g(u) + \frac{1}{2\lambda} \Vert x - u \Vert^2 \Big \}.
\end{align}
The proximal operator is in fact a generalization of the Euclidean projection operator: let $g(x)=0$ if $x\in A$, $g(x)=\infty$ otherwise, for some convex set $A \in \mathbb R^n$, then the resulting proximal operator is $\text{prox}_g^\lambda(x) = \text{proj}_A(x)$ \cite{parikh2014proximal}.
In addition, if $g \in C^1(\mathbb R^d)$, then the prox$_g^\lambda$ operator satisfies 
\begin{align}\label{prox_op_1}
p=\text{prox}_g^\lambda (x)~\iff~x-p=\lambda \nabla g(p),
\end{align}
or, if $g$ is convex but not differentiable, it still satisfies $p=\text{prox}_g^\lambda (x)~\iff~x-p\in \lambda\partial g(p)$,
where $\partial g(p)$ denotes the subdifferential relation $\partial g(p)=\{u\in\mathbb R^n:\forall y\in\mathbb R^n,(y-p)^Tu+g(p)\leq g(y)\}$.

\subsection{The Moreau-Yosida envelope}
One can approximate a non-differentiable and convex function $g$ with the \emph{Moreau-Yosida envelope} (MYE) which is defined by
\begin{align}\label{eq:mye_opt}
    g^\lambda(x)=\inf_z\left[g(z)+\frac{1}{2\lambda}\lVert x-z\rVert^2\right].
\end{align}
The MYE is again a convex function by convexity of the infimum, and as the positivity of the quadratic term preserves minima. If $g$ furthermore is $L$-Lipschitz continuous, this envelope is close to the original function, as the following theorem from \cite[Proposition 3.4 with $\lambda=1/r$]{hosseini2019nonsmooth} shows:
\begin{thm}\label{thm_MYE_closeness}
Let $g:\mathcal X\rightarrow ]-\infty,\infty]$ be a proper lower semi-continuous convex function, and $L$-Lipschitz. Let $\lambda>0$. Then for any $x\in \textbf{dom}(g)$,
\begin{align*}
0\leq g(x)-g^\lambda(x)\leq \frac{L^2\lambda}{2}.
\end{align*}\end{thm}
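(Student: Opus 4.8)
The plan is to establish the two inequalities separately, both directly from the definition \eqref{eq:mye_opt}, using almost nothing beyond the hypotheses. Throughout, note that $x\in\textbf{dom}(g)$ guarantees $g(x)<\infty$, so every quantity below is finite and the bounds are meaningful. For the lower bound $g^\lambda(x)\le g(x)$, I would simply evaluate the infimand at the feasible point $z=x$: the quadratic penalty vanishes and the bracket equals $g(x)$, so the infimum over all $z$ can only be smaller. This gives $g(x)-g^\lambda(x)\ge 0$ immediately, using only that the infimum is taken over a set containing $x$.

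For the upper bound I would exploit $L$-Lipschitz continuity to control $g(z)$ from below. Since $g(z)\ge g(x)-L\lVert x-z\rVert$ for every $z$, each term in the infimum satisfies
\begin{align*}
g(z)+\frac{1}{2\lambda}\lVert x-z\rVert^2 \;\ge\; g(x)-L\lVert x-z\rVert+\frac{1}{2\lambda}\lVert x-z\rVert^2.
\end{align*}
Writing $t=\lVert x-z\rVert\ge 0$, the right-hand side depends on $z$ only through $t$, so the infimum over $z$ is bounded below by minimizing the scalar function $t\mapsto g(x)-Lt+\frac{1}{2\lambda}t^2$ over $t\ge 0$. Completing the square yields $g(x)+\frac{1}{2\lambda}(t-\lambda L)^2-\frac{\lambda L^2}{2}$, whose minimum over $t\ge 0$ is attained at the feasible point $t=\lambda L$ (feasible precisely because $\lambda,L\ge 0$) and equals $g(x)-\frac{\lambda L^2}{2}$. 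Hence $g^\lambda(x)\ge g(x)-\frac{\lambda L^2}{2}$, which rearranges to the claimed upper bound.

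The only point requiring a little care, and the closest thing to an obstacle here, is that the infimum in \eqref{eq:mye_opt} need not be attained; I therefore phrase the Lipschitz estimate as a pointwise lower bound valid for every $z$ before passing to the infimum, rather than arguing at a putative minimizer. Properness and lower semi-continuity of $g$ serve to guarantee that $g^\lambda$ is well defined and finite, but they are not otherwise invoked, and it is worth observing that neither inequality actually uses convexity of $g$: both follow from the elementary, purely metric estimates above.
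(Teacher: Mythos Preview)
Your argument is correct: the lower bound follows from evaluating the infimand at $z=x$, and the upper bound from the pointwise Lipschitz estimate $g(z)\ge g(x)-L\lVert x-z\rVert$ followed by the scalar minimization in $t=\lVert x-z\rVert$. Your remark that convexity is not actually used, and your care in bounding every $z$ rather than an assumed minimizer, are both appropriate.

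As for comparison: the paper does not supply its own proof of this theorem but simply quotes it from \cite[Proposition~3.4]{hosseini2019nonsmooth}, so there is no in-paper argument to compare against. Your direct, self-contained derivation is the standard one and matches what the cited source does.
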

This characterizes the trade-off between the precision parameter $\lambda$ and the Lipschitz constant of the gradient of $g$, such that $\lambda$ should be chosen of the order $\mathcal O(L^{-2})$ to achieve tightness of the approximation. We note that the upper bound is often achieved, e.g. for the example in Figure \ref{fig:mye_test} for $\lambda=0.25$ we have a Lipschitz constant $L=1$, and for any $x$ with $\lvert x\rvert>\lambda$ we can easily check that $g(x)-g^\lambda(x)=\frac\lambda2$. Additionally, the MYE is differentiable everywhere, and we can compute the derivative using the proximal operator defined in the previous section \cite[Theorem 12.30]{bauschke2011convex}:
\begin{thm}\label{theorem:log_gradient_prox}
Let $g:\mathcal X\rightarrow ]-\infty,\infty]$ be a proper lower semi-continuous convex function, and let $\lambda>0$. Then $g^\lambda$ is Fr\'echet differentiable, and its gradient is $1/\lambda$-Lipschitz continuous, given by
\[
\nabla g^\lambda(x)=\frac1\lambda (x-\emph{prox}_g^\lambda (x)).
\]
\end{thm}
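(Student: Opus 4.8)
The plan is to exploit the optimality characterization of the proximal operator together with a two-sided squeeze that pins down both the subgradient and a quadratic majorant of $g^\lambda$ at each point. First I would record that the objective $z\mapsto g(z)+\frac{1}{2\lambda}\lVert x-z\rVert^2$ is proper, lower semi-continuous and $\tfrac1\lambda$-strongly convex, so it admits a unique minimizer; this makes $\text{prox}_g^\lambda$ single-valued and justifies writing $p:=\text{prox}_g^\lambda(x)$ and $g^\lambda(x)=g(p)+\frac{1}{2\lambda}\lVert x-p\rVert^2$. The optimality condition, which is exactly the subdifferential characterization $x-p\in\lambda\partial g(p)$ recalled after \eqref{prox_op_1}, yields the candidate gradient $v:=\frac1\lambda(x-p)\in\partial g(p)$.

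The core of the argument is to show that this same $v$ is the unique subgradient of $g^\lambda$ at $x$. For the upper bound I would simply plug the suboptimal point $z=p$ into the infimum defining $g^\lambda(y)$, giving $g^\lambda(y)\le g(p)+\frac{1}{2\lambda}\lVert y-p\rVert^2$; expanding $\lVert y-p\rVert^2=\lVert y-x\rVert^2+2\langle y-x,x-p\rangle+\lVert x-p\rVert^2$ and subtracting $g^\lambda(x)$ produces the quadratic majorant
\[
g^\lambda(y)-g^\lambda(x)\le \langle v,y-x\rangle+\frac{1}{2\lambda}\lVert y-x\rVert^2.
\]
For the matching lower bound I would let $q:=\text{prox}_g^\lambda(y)$, use the subgradient inequality $g(q)\ge g(p)+\langle v,q-p\rangle$ coming from $v\in\partial g(p)$, and substitute into $g^\lambda(y)=g(q)+\frac{1}{2\lambda}\lVert y-q\rVert^2$. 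After collecting terms, the residual $g^\lambda(y)-g^\lambda(x)-\langle v,y-x\rangle$ reduces to $\frac{1}{2\lambda}\lVert (x-p)-(y-q)\rVert^2\ge0$, establishing $v\in\partial g^\lambda(x)$. Combining the two bounds gives $0\le g^\lambda(y)-g^\lambda(x)-\langle v,y-x\rangle\le\frac{1}{2\lambda}\lVert y-x\rVert^2$, which is $o(\lVert y-x\rVert)$ as $y\to x$, i.e.\ precisely Fr\'echet differentiability with $\nabla g^\lambda(x)=v=\frac1\lambda(x-p)$.

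Finally, for the $1/\lambda$-Lipschitz bound on the gradient I would prove firm nonexpansiveness of the prox: writing $p_i=\text{prox}_g^\lambda(x_i)$ and applying monotonicity of $\partial g$ to $\frac1\lambda(x_i-p_i)\in\partial g(p_i)$ gives $\langle x_1-x_2,p_1-p_2\rangle\ge\lVert p_1-p_2\rVert^2$. Since $\nabla g^\lambda(x_1)-\nabla g^\lambda(x_2)=\frac1\lambda\big((x_1-x_2)-(p_1-p_2)\big)$, expanding the squared norm and using this inequality yields $\lVert(x_1-x_2)-(p_1-p_2)\rVert^2\le\lVert x_1-x_2\rVert^2-\lVert p_1-p_2\rVert^2\le\lVert x_1-x_2\rVert^2$, hence the claimed Lipschitz constant.

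The step I expect to be the main obstacle is the lower subgradient bound: unlike the upper bound, it cannot be obtained by a single substitution, and requires simultaneously invoking the subgradient of $g$ at $p$ and the optimality of $q=\text{prox}_g^\lambda(y)$, after which the cross terms must be reorganized into a perfect square. An alternative, more machinery-heavy route would bypass the squeeze entirely by noting that $g^\lambda$ is the infimal convolution of $g$ with $\frac{1}{2\lambda}\lVert\cdot\rVert^2$, so $(g^\lambda)^\ast=g^\ast+\frac\lambda2\lVert\cdot\rVert^2$ is $\lambda$-strongly convex, and then appealing to the duality between strong convexity of the conjugate and Lipschitz smoothness of the primal; I would nonetheless prefer the elementary squeeze argument, since it delivers the explicit gradient formula directly rather than merely asserting smoothness.
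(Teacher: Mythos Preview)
Your argument is correct and self-contained. The paper, however, does not actually prove this statement: it merely quotes it as \cite[Theorem 12.30]{bauschke2011convex} and moves on. So there is no ``paper's proof'' to compare against in detail.

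That said, your route is essentially the standard elementary one that underlies the cited textbook result: sandwich $g^\lambda(y)-g^\lambda(x)-\langle v,y-x\rangle$ between $0$ and $\frac{1}{2\lambda}\lVert y-x\rVert^2$ to get Fr\'echet differentiability and the explicit gradient, then use monotonicity of $\partial g$ to obtain firm nonexpansiveness of the prox and hence the $1/\lambda$-Lipschitz bound. One minor wording point: in your lower-bound step the residual is bounded \emph{below} by $\frac{1}{2\lambda}\lVert(x-p)-(y-q)\rVert^2$ rather than equal to it (you used the subgradient inequality for $g$ at $p$, which is generally strict), but that is exactly what you need for $v\in\partial g^\lambda(x)$, so the conclusion is unaffected. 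Your alternative conjugate-duality route via $(g^\lambda)^\ast=g^\ast+\frac{\lambda}{2}\lVert\cdot\rVert^2$ is also valid and is closer in spirit to how Bauschke--Combettes organize things, but as you note it is less direct for extracting the explicit formula $\nabla g^\lambda(x)=\frac{1}{\lambda}(x-\text{prox}_g^\lambda(x))$.
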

Rearranging the above equation reveals that iterative application of the proximal operator just corresponds to gradient descent of the MYE-smoothed version of $g$. In Figure \ref{fig:mye_test}, we provide a simple visual aid that illustrates the behavior of both the MYE and proximal operator and the respective densities in a basic example, $g(x) = |x|$ such that $g \in C^0$. The resulting MYE in this case is the Huber loss function. In general, we do not have access to closed-form solutions to either the MYE or the proximal operator, and the solution to either can be quite computationally expensive. 

\begin{figure}
	\includegraphics[scale=0.5]{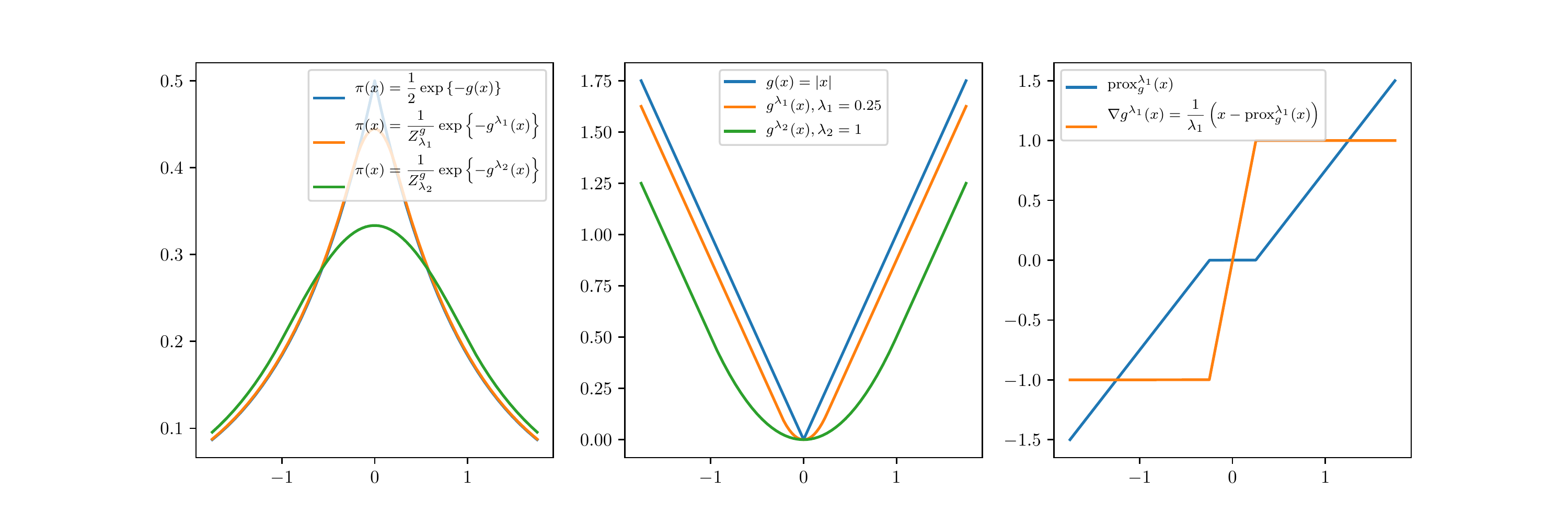}
	\caption{Left: The Laplace distribution and two MYE enveloped versions with $\lambda_1=0.25$ and $\lambda_2=1$, respectively. The normalizing constant of the MYE-adjusted density is $Z^g_\lambda = \int_{\mathbb R} \exp\{-g^\lambda(x) \} dx.$ Center: Plot of the log-densities. Right: The proximal operator of $g$ and the corresponding gradient of the MYE $g^\lambda(x)$.}
	\label{fig:mye_test}
\end{figure}

Having recalled these basic facts about the proximal operator, we now provide an easily verifiable bound on the precision achievable when using the MYE in sampling algorithms: Theorem \ref{thm_expectation_bounds} quantifies the error obtained if we compute (exactly) an expectation with respect to the smoothed target distribution versus the true (non-differentiable) distribution in the sense that for suitably regular $f:\mathcal X\rightarrow \mathbb R$, $\mathbb E_{\pi^\lambda}(f)\approx\mathbb E_\pi(f)$. The benefit of the first inequality is that we can easily verify the right-hand side numerically and thus get an error bound estimate.

\begin{thm}\label{thm_expectation_bounds}
Let $g=-\log \pi$ be the negative logarithm of a probability density function, with $g$ being a proper lower semi-continuous convex function, and $L$-Lipschitz. Let $g^\lambda$ be the Moreau-Yosida envelope to $g$, and let $\pi^\lambda(x)=\exp(-g^\lambda(x))/(\int\exp(-g^\lambda(z))dz)$ be a probability density function. Then for any $\pi$- and $\pi^\lambda$-integrable $f:\mathcal X\rightarrow \mathbb R$:
\begin{align}
    &|\mathbb E_{\pi^\lambda}(f)-\mathbb E_\pi(f)|\leq(\exp(L^2\lambda)-1)\mathbb E_{\pi^\lambda}(|f|)\\
    &|\mathbb E_{\pi^\lambda}(f)-\mathbb E_\pi(f)|\leq(\exp(L^2\lambda)-1)\mathbb E_\pi(|f|).
\end{align}
The same inequalities hold if $g=g_1+g_2$ with a convex and Lipschitz-continuous $g_1$ and a differentiable (but not necessarily Lipschitz-continuous) $g_2$: In that case, one takes the MYE of $g_1$ only, resulting in the approximate pdf $\pi^\lambda(x)=\exp(-g_1^\lambda(x)-g_2(x))/(\int\exp(-g_1^\lambda(z)-g_2(z))dz)$.
\end{thm}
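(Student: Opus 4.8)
The plan is to reduce everything to a pointwise multiplicative comparison of the two densities, using only the closeness estimate of Theorem \ref{thm_MYE_closeness} as analytic input. Set $h(x) = g(x) - g^\lambda(x)$; Theorem \ref{thm_MYE_closeness} gives $0 \le h(x) \le L^2\lambda/2$ for every $x$, hence $1 \le e^{h(x)} \le e^{L^2\lambda/2}$. Since $g = -\log\pi$ for a genuine density, we have $\pi(x) = e^{-g(x)}$ with $\int e^{-g} = 1$, so the normalizing constant of the smoothed target can be rewritten as $Z_\lambda = \int e^{-g^\lambda} = \int e^{h}\,e^{-g} = \mathbb E_\pi[e^{h}]$, which therefore also lies in $[1, e^{L^2\lambda/2}]$. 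This is the step I expect to be the crux: one must resist treating $Z_\lambda$ as an uncontrolled constant and instead recognize it as an expectation of the same bounded quantity $e^{h}$, so that it inherits identical bounds. Everything downstream is bookkeeping.

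With these two facts I would establish the single ratio estimate that drives both inequalities. From $\pi^\lambda/\pi = e^{h}/Z_\lambda$ and $Z_\lambda \ge 1$ we get $|\pi^\lambda/\pi - 1| = |e^{h} - Z_\lambda|/Z_\lambda \le |e^{h} - Z_\lambda|$. Because both $e^{h(x)}$ and $Z_\lambda$ lie in $[1, e^{L^2\lambda/2}]$, the triangle inequality yields $|e^{h} - Z_\lambda| \le (e^{L^2\lambda/2} - 1) + (e^{L^2\lambda/2} - 1) = 2(e^{L^2\lambda/2} - 1)$, and the factorization $e^{L^2\lambda} - 1 = (e^{L^2\lambda/2} - 1)(e^{L^2\lambda/2} + 1) \ge 2(e^{L^2\lambda/2} - 1)$ upgrades this to the stated clean form $|\pi^\lambda/\pi - 1| \le e^{L^2\lambda} - 1$. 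The symmetric estimate $|\pi/\pi^\lambda - 1| \le e^{L^2\lambda} - 1$ follows identically from $\pi/\pi^\lambda = Z_\lambda\,e^{-h} = e^{-h}(Z_\lambda)$ and $e^{-h} \le 1$, giving $|1-\pi/\pi^\lambda| = e^{-h}|e^{h}-Z_\lambda| \le |e^{h}-Z_\lambda|$.

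Finally I would convert these pointwise ratio bounds into the two integral inequalities. Writing $\mathbb E_{\pi^\lambda}(f) - \mathbb E_\pi(f) = \int f(\pi^\lambda - \pi)$ and factoring out $\pi$ gives $|\mathbb E_{\pi^\lambda}(f) - \mathbb E_\pi(f)| \le \int |f|\,\pi\,|\pi^\lambda/\pi - 1| \le (e^{L^2\lambda}-1)\,\mathbb E_\pi(|f|)$, which is the second inequality; factoring out $\pi^\lambda$ instead and invoking the symmetric ratio bound yields the first, with $\mathbb E_{\pi^\lambda}(|f|)$ on the right. For the decomposition $g = g_1 + g_2$, the argument is unchanged: enveloping $g_1$ only makes $g_2$ cancel in every ratio, so the controlling quantity becomes $h = g_1 - g_1^\lambda$, to which Theorem \ref{thm_MYE_closeness} applies with $L$ the Lipschitz constant of $g_1$, and the same two bounds go through verbatim.
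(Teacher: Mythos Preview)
Your proof is correct and uses the same analytic input as the paper (the bound $0\le g-g^\lambda\le L^2\lambda/2$ from Theorem~\ref{thm_MYE_closeness} and the resulting sandwich $1\le Z_\lambda\le e^{L^2\lambda/2}$), but the packaging differs. The paper works at the level of expectations: for non-negative $f$ it first shows the two-sided multiplicative estimate $e^{-L^2\lambda/2}\,\mathbb E_\pi(f)\le \mathbb E_{\pi^\lambda}(f)\le e^{L^2\lambda/2}\,\mathbb E_\pi(f)$, extracts from this the additive bound, and only then passes to general $f$ via the decomposition $f=f^+-f^-$. You instead establish a \emph{pointwise} density-ratio bound $|\pi^\lambda/\pi-1|\le e^{L^2\lambda}-1$ (and its symmetric counterpart) and integrate directly against $|f|$, which sidesteps the $f^+,f^-$ step entirely and is arguably cleaner. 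One small trade-off: the paper's expectation-level argument actually yields the sharper intermediate constant $e^{L^2\lambda/2}-1$ before relaxing to the stated $e^{L^2\lambda}-1$, whereas your triangle-inequality route naturally lands at $2(e^{L^2\lambda/2}-1)$, which you then bound by $e^{L^2\lambda}-1$ via the factorization; both meet the theorem's claim, but the paper's path would let one state a slightly tighter result. Your handling of the $g=g_1+g_2$ case is also correct and matches the paper's observation that $g_2$ cancels in all ratios.
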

\begin{proof}
See Supplementary Material \ref{proof:thm_expectation_bounds}.
\end{proof}

Choosing $f(x)=sgn(\pi^\lambda(x)-\pi(x))$ in this theorem allows to recover proposition 3.1 in \cite{durmus2018efficient} by only considering a one-sided inequality in Equation \ref{proof_ineq:twobounds} in the proof of the theorem, and can thus be viewed as a generalization thereof. Theorem \ref{thm_expectation_bounds} shows that for exact integral estimates, one wants to pick $\lambda$ as small as possible. This, however, does come at a cost, as the gradients of the MYE approximated target grow as $\lambda\rightarrow0$ such that one needs to take a smaller step size in the diffusion algorithms discussed in the next section. The relation between different approximation parameters and the size of the gradient is given by the next lemma.

\begin{lemma}\label{thm_gradient_bounds}
Let $g:\mathcal X\rightarrow ]-\infty,\infty]$ be a proper lower semi-continuous convex function, and let $0<\lambda_1\leq\lambda_2$; then for the corresponding Moreau-Yosida envelopes $g^{\lambda_1}$ and $g^{\lambda_2}$, we have
\begin{align*}
    \lVert \nabla g^{\lambda_1}(x)\rVert\geq\lVert\nabla g^{\lambda_2}(x)\rVert\quad\forall x\in\mathcal X.
\end{align*}
\end{lemma}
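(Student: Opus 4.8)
The plan is to reduce the statement to the monotonicity of the subdifferential $\partial g$, which holds precisely because $g$ is convex. First I would write both gradients in terms of the proximal operator via Theorem~\ref{theorem:log_gradient_prox}: setting $p_i = \text{prox}_g^{\lambda_i}(x)$ and $v_i = \nabla g^{\lambda_i}(x) = (x-p_i)/\lambda_i$ for $i=1,2$, the subdifferential characterization accompanying Equation~\eqref{prox_op_1} (the non-differentiable version $x-p\in\lambda\partial g(p)$) gives the inclusion $v_i \in \partial g(p_i)$. Equivalently $p_i = x - \lambda_i v_i$, so each $v_i$ is a subgradient of $g$ at the perturbed point $x-\lambda_i v_i$. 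This reformulation is the whole point: since $g$ need not be differentiable, the argument must run through $\partial g$ rather than $\nabla g$.

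Next I would invoke monotonicity of $\partial g$: from $v_1\in\partial g(p_1)$ and $v_2\in\partial g(p_2)$ we get $\langle v_1-v_2,\, p_1-p_2\rangle \ge 0$. Substituting $p_1-p_2 = \lambda_2 v_2 - \lambda_1 v_1$ and expanding the inner product yields the single key inequality $(\lambda_1+\lambda_2)\langle v_1,v_2\rangle \ge \lambda_1\|v_1\|^2 + \lambda_2\|v_2\|^2$, from which everything else follows by elementary manipulation.

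The final step is purely algebraic. Writing $a=\|v_1\|$, $b=\|v_2\|$ and bounding $\langle v_1,v_2\rangle \le ab$ by Cauchy--Schwarz, I would obtain $\lambda_1 a^2 - (\lambda_1+\lambda_2)ab + \lambda_2 b^2 \le 0$, which factors neatly as $(a-b)(\lambda_1 a - \lambda_2 b)\le 0$. Assuming $a<b$ for contradiction and using $\lambda_1\le\lambda_2$ forces $\lambda_1 a \le \lambda_2 a < \lambda_2 b$, so both factors are strictly negative and their product is positive, a contradiction. Hence $a\ge b$, i.e. $\|\nabla g^{\lambda_1}(x)\| \ge \|\nabla g^{\lambda_2}(x)\|$, as claimed.

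I do not anticipate a genuine obstacle; the only point demanding care is that the subgradient inclusion and the resulting monotonicity inequality carry the argument, so I must keep the reasoning at the level of $\partial g$ throughout. Conceptually this is just the standard fact that the Yosida regularization $\nabla g^\lambda = (\partial g)_\lambda$ is nonincreasing in norm as $\lambda$ grows; the plan above makes that concrete without appealing to the general maximal-monotone-operator machinery.
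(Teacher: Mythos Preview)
Your proof is correct and takes a genuinely different route from the paper. The paper first invokes the semigroup property of Moreau envelopes, $g^{\lambda_2}=(g^{\lambda_1})^{\lambda_2-\lambda_1}$ (Bauschke--Combettes, Prop.~12.22(ii)), to reduce the statement to comparing a \emph{differentiable} convex function $h:=g^{\lambda_1}$ with its single envelope $h^{\lambda_2-\lambda_1}$; it then applies gradient monotonicity of $h$ at the pair $(x,p)$ with $p=\text{prox}_h^{\lambda_2-\lambda_1}(x)$ and finishes with Young's inequality to get $\|\nabla h(x)\|\ge\|\nabla h(p)\|=\|\nabla h^{\lambda_2-\lambda_1}(x)\|$. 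You instead work with the two proximal points $p_1,p_2$ of the original $g$ simultaneously, use monotonicity of $\partial g$ once to obtain $(\lambda_1+\lambda_2)\langle v_1,v_2\rangle\ge\lambda_1\|v_1\|^2+\lambda_2\|v_2\|^2$, and close with Cauchy--Schwarz and the clean factorization $(a-b)(\lambda_1 a-\lambda_2 b)\le 0$. Your argument is more self-contained---it never leaves the subdifferential level and does not need the external semigroup lemma---while the paper's reduction buys the convenience of working with a differentiable $h$, so that ordinary gradients and a single Young inequality suffice. Both proofs are short and rest on the same underlying monotonicity principle.
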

\begin{proof}
See Supplementary Material \ref{proof:thm_gradient_bounds}.
\end{proof}

\section{Continuous-time stochastic processes for sampling from non-differentiable posteriors}
\label{sampling_section}

We now discuss ways of sampling from posteriors that are non-differentiable. We will firstly discuss algorithms based on the Langevin equations and proximal operators, and introduce another Langevin type method using underdamped, or second-order, Langevin dynamics. We will then discuss how samplers based on Piecewise-deterministic Markov Processes can be used in the above mentioned setting. 

\subsection{Langevin dynamics}
\subsubsection{Overdamped Langevin Dynamics}
The overdamped Langevin equation is defined as the stochastic differential equation given by
\begin{align}
\label{overdamped_langevin}
    dx_t &= \nabla U(x_t)dt+\sqrt{2}dB_t,
\end{align}
where $U(x)$ is the log-posterior and $B_t$ is a $n$-dimensional Brownian motion. The resulting invariant distribution of the semi-group associated to the Langevin equation is under smoothness assumptions proportional to $\exp(-U(x))$. This implies that equation \eqref{overdamped_langevin} can be simulated according to some discretization scheme, typically Euler-Maruyama, to generate samples that are distributed according to $\pi$. The discretization of \eqref{overdamped_langevin} leads to bias that is typically corrected with a Metropolis-Hastings (M-H) step, leading to the popular Metropolis-adjusted Langevin Algorithm (MALA) \cite{roberts1996exponential}. 
Large-scale, data-intensive models has lead to a notable increase in interest in unadjusted Langevin algorithms (ULA) \cite{welling2011bayesian}, as no accept/reject step is applied. In this case, mixing is generally improved \cite{durmus2018efficient} and gradient evaluations are not wasted on rejected proposals. The step-size is also not forced to comply with theoretically-optimal acceptance rates \cite{roberts1998optimal}, rather, the step-size is chosen to be of the order of $\lambda$ \cite{durmus2018efficient}.

For a general non-differentiable distribution $\pi$ one may, as proposed in \cite{pereyra2016proximal} and further studied in \cite{durmus2018efficient}, target the MYE-smoothed version $\pi^\lambda$ instead. In particular, the gradient of \eqref{overdamped_langevin} is split into a likelihood derivative and an evaluation of the proximal operator via Theorem \ref{theorem:log_gradient_prox}:
\begin{align}\label{no_split_posterior}
\nabla \log \pi^\lambda(x) &= \nabla_x \ell(x) + \frac{1}{\lambda} \Big (x - \text{prox}_{\log \pi_0}^\lambda(x) \Big ) =: -\nabla U_\lambda(x).
\end{align}
The resulting sampler is known as proximal MALA (pMALA) or MY-ULA, depending on whether or not a Metropolis-Hastings step is included. In \cite{vargas2019accelerating}, the authors use stabilized explicit integrators to simulate the diffusion \ref{overdamped_langevin}, which allows accelerated sampling from the posterior. Such stabilized explicit integrators are particularly strong for stiff PDEs, and the SK-ROCK algorithm from \cite{vargas2019accelerating} is included in our numerical comparisons where applicable.

\subsubsection{Underdamped Langevin dynamics}
The dynamics of the overdamped Langevin equation are characterized by reversible, diffusive behavior. A generic way to alleviate these backtracking tendencies over short time-scales is to introduce persistence in the trajectories via a notion of velocity. We therefore augment our target space $\mathcal X$ with $\mathbb R^n$, and on this space let $v$ be a $n$-dimensional vector of velocities drawn from $\mathcal{N}(0, u I_n).$ 
To this end, we now consider the underdamped Langevin Diffusion on $(x,v)\in\mathbb R^{n\times n}$
\begin{align}
    dx_t &= v_tdt\\
    dv_t &= -\gamma v_tdt+u\nabla U(x_t)dt+\sqrt{2\gamma u}dB_t,
\end{align}
for $u > 0$ and $\gamma > 0$ a speed parameter.
Under mild conditions, this process has an invariant distribution proportional to $\exp(-U(x)-\lVert v\rVert_2^2/(2u))$, and thus the marginal distribution of $x$ is the desired target distribution $\pi(x)\propto\exp(-U(x))$. A second-order discretization scheme is provided in the Supplementary Material \ref{appendix:discretisation}. 
Recent theoretical advances have been made to elucidate non-asymptotic properties and the speed of convergence in probability metrics (KL-divergence, Wasserstein distance, etc.) for various formulations of Langevin-based algorithms, see e.g. \cite{dalalyan2017theoretical, cheng2017underdamped, wibisono2019proximal}. In the case where $U$ is $m$-strongly log-concave and Lipschitz-differentiable with parameter $L$, the number of samples required to achieve $\epsilon$ precision in Wasserstein-2 distance scales with $O(\sqrt{n})$, with $n$ the dimension of the model, compared to $O(n)$ for the overdamped dynamics \cite{cheng2017underdamped}. A natural extension of the MY-ULA algorithm is therefore to use the more elaborate dynamics of the underdamped Langevin SDE to explore the smoothed target distribution $\pi^\lambda$:
 \begin{align}\label{ULdynamics}
    dx_t &= v_tdt\\
    dv_t &= -\gamma v_tdt+u\nabla U^\lambda(x_t)dt+\sqrt{2\gamma u}dB_t.
\end{align}
As with MY-ULA, the convergence of a discretization of these dynamics depends on the Lipschitz constant of the gradient \cite{cheng2017underdamped}, which implies that the trade-off between posterior accuracy and mixing speed remains with MY-UULA. In spite of this, the improved dimensional scaling of MY-UULA can provide significant improvements over MY-ULA, see e.g. Example \ref{ex:100D_laplace}.

\subsection{Piecewise-Deterministic Markov Processes}
Below we elucidate the two most popular types of PDMP samplers, the Bouncy Particle Sampler (BPS) (First introduced in statistical physics by \cite{peters2012rejection} and subsequently ported to statistics in \cite{bouchard2018bouncy})
and the Zig-Zag Sampler (ZZS) \cite{bierkens2019zig}. Both are augmented-variable methods that introduce a notion of velocity to accelerate exploration. PDMP-samplers explore the target with persistent, deterministic dynamics, interspersed with direction-changes at random times. For both types of samplers, the key quantity is, similarly to the diffusion-based methods, the gradient of the target density. In contrast, however, the computational cost of running the samplers is determined by bounds on the gradient, and it only enters the dynamics at select times through the jump-process that updates the velocities. To run the samplers under a.e. differentiability, note that the gradient is well-defined everywhere outside a Lebesgue-nullset
\begin{align}\label{A_0}
    A_0 = \left \{x \in \mathcal X \; \vert \; \exists i \text{ such that } \frac{\partial U}{\partial x_i} \text{ does not exist.} \right \};
\end{align}
we will on that subset replace any undefined derivative with zero, detailed descriptions and proofs are given below. We begin with the ZZS, the BPS and its variations follows subsequently. We end this section with a short overview of how to simulate these processes in practice. 

\subsubsection{Zig-Zag Sampler}
We augment our space with a new random variable, which we will denote velocity. Let $v \in \{-1, 1\}^n$, and consider the uniform distribution over velocities $p(v) = \mathcal U(v)$. The continuous-time process $(z_t)_{t \geq 0} = (x_t, v_t)_{t \geq 0}$ associated with the ZZS targets the joint distribution $p(x,v) = \pi(x) \mathcal U(v)$.

Consider an initial value $(x_0, v_0) \sim p_0(x_0, v_0)$. The deterministic flow of the ZZS between events is given as a solution to the ODE $(\dot{x_t}, \dot{v_t}) = (v_t, 0)$, which is just $(x_t, v_t) = (x_0 + v_0 \cdot t, 0)$, indicating that the flow of $x$ over time is a continuous but non-differentiable process in $x$. For each individual dimension of the problem, we associate an inhomogeneous Poisson process (iPP), and the rate at which a jump, a change in velocity, occurs for dimension $i$ is given by
\begin{align*}
    \rho^i_{ZZ}(t) := \rho_{ZZ}^i(t; x, v) = \max \left \{0, \frac{\partial}{\partial x_i} U(x + v\cdot t) \cdot v_i \right \}
\end{align*}
The rate at which jumps occur globally for the sampler is subsequently just $\rho_{ZZ}(t) = \sum_{i=1}^n \rho_{ZZ}^i(t)$. In the case of prior non-differentiability, define
\begin{align*}
    B_x = \left \{ 1,2, \ldots,n \; \vert \; \frac{\partial U}{\partial x_i}(x) \text{ does not exist.} \right \}, \quad \forall x \in A_0
\end{align*}
with $A_0$ as in equation \ref{A_0}, and subsequently set $\{ \partial U / \partial x_j(x)\}_{j \in B_x}$ equal to zero.
Intuitively, if the derivative of $U(x)$ with respect to a coordinate $i$ is of the same sign as the velocity $v_i$, then it implies that the particle is entering the tail of the density. As this accumulation continues, one of the rates will generate an event in an attempt to force a return to regions of higher probability.
The rate that generated the event and the new event time $\tau$ are in practice found by
\begin{align*}
    j^* = \argmin_{j=1,2,\ldots, n} \tau_j, \quad \tau = \tau_{j^*}
\end{align*}
respectively, and for the particular dimension $j^*$, the flip operator $\mathcal F_{j^*} v$ is applied:
\[
    \mathcal F_{j^*} v = \begin{cases}
                        v_i = -1 \cdot v_i, \quad&\text{if}\quad  i = j^* \\
                        v_k = v_k, \quad &\text{else,}
                        \end{cases}
\]
which trivially preserves the volume of the extended target. In this sense the ZZS is naturally localized across all dimensions, as each dimension depends only on the variables that interacts with the derivative with respect to that coordinate. The infinitesimal generator of the Markov process associated with the ZZS is
\begin{equation}\label{eq:zz}  
\mathcal L_{ZZ} f(x,v) = \langle \nabla_x f(x), v \rangle + \sum_{i=1}^n \rho_{ZZ}^i(t) [f(x, \mathcal F_i v) - f(x, v)].
\end{equation}

\begin{lemma}\label{lemma:zz_invariance}
Consider a distribution $\pi(x)\mathcal U(v)$ that is differentiable in $x$ outside of $A_0$. If $A_0$ is a Lebesgue null-set, the Zig-Zag process with generator given in Equation (\ref{eq:zz}) has invariant distribution $\pi(x) \mathcal U(v)$.
\end{lemma}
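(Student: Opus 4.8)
The plan is to establish invariance through the infinitesimal generator, i.e.\ to verify the stationarity identity $\int_{\mathcal X\times\{-1,1\}^n}\mathcal L_{ZZ}f(x,v)\,\pi(x)\mathcal U(v)\,dx\,dv=0$ for every $f$ in a suitable core of the generator, for instance the smooth, compactly supported functions. By the standard theory of piecewise-deterministic Markov processes \cite{davis1984piecewise}, establishing this identity on a core, together with non-explosivity of the process, suffices to conclude that $\pi(x)\mathcal U(v)$ is an invariant distribution. Here $\rho_{ZZ}^i(t)$ in the generator \eqref{eq:zz} is understood as the instantaneous rate at the current state, that is $\rho_{ZZ}^i(x,v)=\max\{0,\partial_{x_i}U(x)\,v_i\}$ evaluated at $t=0$.

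First I would split $\mathcal L_{ZZ}f$ into its transport part $\langle\nabla_x f,v\rangle$ and its jump part $\sum_i\rho_{ZZ}^i(x,v)[f(x,\mathcal F_i v)-f(x,v)]$, and integrate each against $\pi(x)\mathcal U(v)$ separately. For the transport term, integrating by parts in $x$ and using $\nabla_x\pi=-\pi\,\nabla_x U$ (which holds since $U=-\log\pi$) converts $\int\langle\nabla_x f,v\rangle\pi\,\mathcal U\,dx\,dv$ into $\int f\,\langle\nabla_x U,v\rangle\,\pi\,\mathcal U\,dx\,dv$, the boundary contributions vanishing by the compact support of $f$.

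For the jump part, the key manoeuvre is the change of variables $w=\mathcal F_i v$ in the summand containing $f(x,\mathcal F_i v)$. Since $\mathcal F_i$ is an involution preserving the uniform measure $\mathcal U$, and $(\mathcal F_i w)_i=-w_i$, the rate transforms as $\rho_{ZZ}^i(x,\mathcal F_i w)=\max\{0,-\partial_{x_i}U(x)\,w_i\}$. After relabelling $w$ back to $v$, the contribution of coordinate $i$ collapses to $\int[\max\{0,-\partial_{x_i}U\cdot v_i\}-\max\{0,\partial_{x_i}U\cdot v_i\}]\,f\,\pi\,\mathcal U\,dx\,dv$. Applying the elementary identity $\max\{0,-a\}-\max\{0,a\}=-a$ with $a=\partial_{x_i}U\cdot v_i$ and summing over $i$ yields precisely $-\int\langle\nabla_x U,v\rangle\,f\,\pi\,\mathcal U\,dx\,dv$, which exactly cancels the transport term and produces the desired zero.

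The main obstacle, and the crux of why almost-everywhere differentiability is enough, lies in the treatment of the null set $A_0$ on which the partial derivatives are undefined and set to zero. Because every integral above is taken against $\pi(x)\,dx$, which is absolutely continuous with respect to Lebesgue measure, the set $A_0$ carries zero mass; hence replacing the undefined derivatives by any value, in particular zero, leaves all integrals unchanged and keeps the integration-by-parts step valid. I would make this precise by working with the a.e.-defined version of $\nabla_x U$ and observing that both the transport and jump integrals are insensitive to modifications on $A_0$. It remains to confirm that the chosen function class is indeed a core and that the process is non-explosive; these are routine under the stated regularity but should be recorded explicitly.
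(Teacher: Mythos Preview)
Your proposal is correct, and the crucial observation---that $A_0$ is Lebesgue-null and $\pi(x)\mathcal U(v)$ is absolutely continuous with respect to Lebesgue measure, so every integral in the invariance computation is insensitive to how $\nabla U$ is defined on $A_0$---is exactly the point the paper makes. The difference is one of presentation rather than substance: the paper's own proof simply splits $\int\mathcal L_{ZZ}f\,\pi\,\mathcal U$ over $A_0$ and $A_0^c$, notes that the integral over $A_0$ vanishes by absolute continuity, and for the integral over $A_0^c$ defers entirely to \cite[Theorem~2.2]{bierkens2019zig}. You instead carry out that invariance computation explicitly (integration by parts for the transport term, the involution change of variables together with $\max\{0,-a\}-\max\{0,a\}=-a$ for the jump term), which is precisely the content of the cited theorem. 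Your version is self-contained and makes the cancellation transparent; the paper's is terser but requires the reader to look up the reference. One technical caveat worth recording in your write-up is that the integration-by-parts step implicitly needs the a.e.\ pointwise derivative of $\pi$ to agree with its distributional derivative; this holds automatically when $U$ is locally Lipschitz (as for all the priors the paper considers), but is not implied by a.e.\ differentiability alone.
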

\begin{proof}
See Supplementary Material \ref{proof:zz_invariance}.
\end{proof}

\subsubsection{Bouncy Particle Sampler} Given our variable of interest $x \in \mathbb R^n$, we again augment the state space with an additional $n$-dimensional component $v$ and assume that in stationarity $v \sim N(0, I_n)$. With the same flow as the zig-zag sampler, the particle continues along a trajectory until an event occurs, and the rate at which these jumps of the velocity occur is determined by the rate function of a single iPP:
\[
\rho_{BPS}(t) := \rho(t; x, v) = \max \{0, \langle v, \nabla U(x + v\cdot t) \rangle \},
\]
where we recall that $U(x)$ is the negative log-probability of the posterior distribution. Whenever an event occurs, all velocities are updated globally via the deterministic transition operator 
\[
\mathfrak R_x v = v - 2 \frac{\langle v, \nabla U(x) \rangle}{\Vert \nabla U(x) \Vert^2 }\nabla U(x),
\]
which corresponds to a reflection in the hyperplane orthogonal to the gradient at $x.$ As both the rate and the reflection operator is undefined on $A_0$, we set $\nabla U(x) = 0, \forall x \in A_0$. In some cases the BPS has been observed to be reducible \cite{deligiannidis2019exponential}. To avoid this degenerate behaviour, velocity refreshments are introduced: at some rate $\phi > 0$ we draw a new velocity from the stationary distribution $\mathcal N(v; 0, I_n)$; we denote this independent kernel by $Q(v')$. The infinitesimal generator of the BPS with refreshment is then
\begin{equation}\label{eq:bps}
\mathcal L_{BPS} f(x,v) = \langle \nabla_x f(x), v \rangle + \rho_{BPS} (x,v) [f(x, \mathfrak R_x v) - f(x, v)] + \phi \int_{\mathbb R^{n}} [f(x, v') - f(x,v)] Q(v') dv'
\end{equation}
for all $f \in D(\mathcal L)$, the domain of $\mathcal L$. 
\begin{lemma}\label{lemma:bps_invariance}
Consider a distribution $\pi(x)\mathcal \mathcal N(v; 0, I_n)$ that is differentiable in $x$ outside of $A_0$. If $A_0$ is a Lebesgue null-set, the Bouncy Particle Sampler with generator given in Equation (\ref{eq:zz}) has invariant distribution $\pi(x) \mathcal N(v; 0, I_n)$.
\end{lemma}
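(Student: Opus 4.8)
The plan is to establish \emph{weak invariance}: to verify that the target measure $\mu(dx,dv)=\pi(x)\mathcal N(v;0,I_n)\,dx\,dv$ annihilates the generator, i.e. $\int\mathcal L_{BPS}f\,d\mu=0$ for every $f$ in a suitable core of the domain (smooth test functions with adequate decay in $x$ will do). I would treat the three summands of $\mathcal L_{BPS}$ in \eqref{eq:bps} separately, showing that the refreshment term vanishes on its own, while the transport term and the bounce term cancel each other.

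The refreshment term is the easiest. Because the refreshment kernel $Q(v')=\mathcal N(v';0,I_n)$ coincides with the velocity marginal of $\mu$, integrating $\phi\int[f(x,v')-f(x,v)]Q(v')\,dv'$ against $\mu$ produces two integrals; using $\int\mathcal N(v;0,I_n)\,dv=1$ in each and relabelling $v\leftrightarrow v'$ shows they are identical and cancel. For the transport term, integration by parts in $x$ gives $\int\langle\nabla_x f,v\rangle\,d\mu=\int f(x,v)\,\langle\nabla U(x),v\rangle\,d\mu$, where I have used $\nabla_x\log\pi=-\nabla U$ and assumed the boundary contributions vanish by decay of $\pi f$.

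The heart of the argument is the bounce term, where I would exploit three structural properties of the reflection $\mathfrak R_x$: it is an involution, it is norm-preserving and hence leaves $\mathcal N(v;0,I_n)$ invariant with unit Jacobian, and it flips the sign of the directional derivative, $\langle\mathfrak R_x v,\nabla U(x)\rangle=-\langle v,\nabla U(x)\rangle$. Substituting $v\mapsto\mathfrak R_x v$ (for fixed $x$) in $\int\rho_{BPS}(x,v)f(x,\mathfrak R_x v)\,d\mu$ turns it into $\int\max\{0,-\langle v,\nabla U\rangle\}f(x,v)\,d\mu$, so the full bounce term collapses to $\int f(x,v)\bigl[\max\{0,-\langle v,\nabla U\rangle\}-\max\{0,\langle v,\nabla U\rangle\}\bigr]\,d\mu$. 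The elementary identity $\max\{0,-a\}-\max\{0,a\}=-a$ then yields $-\int f(x,v)\langle\nabla U,v\rangle\,d\mu$, which exactly cancels the transport term, giving $\int\mathcal L_{BPS}f\,d\mu=0$.

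The main obstacle is justifying these manipulations in the presence of non-differentiability. Since $U$ is defined only off the null set $A_0$, where we set $\nabla U=0$ (which also forces $\rho_{BPS}=0$ and suppresses bounces there), I would argue that the integration by parts in the transport term stays valid because $\log\pi$ is absolutely continuous and its classical a.e.\ gradient agrees with its weak gradient, so $A_0$ contributes nothing to any of the Lebesgue integrals above. Making this precise—verifying that the a.e.-defined drift yields a well-posed PDMP whose generator really is \eqref{eq:bps} on a dense domain, and that no mass concentrates on $A_0$—is the technical crux, and I expect it to parallel the corresponding argument for the Zig-Zag sampler in Lemma \ref{lemma:zz_invariance}.
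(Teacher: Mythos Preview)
Your proposal is correct and follows the same approach as the paper: establish weak invariance by showing $\int\mathcal L_{BPS}f\,d\mu=0$, and dispose of the non-differentiability by observing that $A_0$ is a Lebesgue null set and all measures involved are absolutely continuous. The only difference is granularity: the paper's proof is two lines---it splits the integral over $A_0$ and $A_0^c$, notes the former vanishes by absolute continuity, and for the latter simply cites \cite[Proposition~1]{bouchard2018bouncy}---whereas you have unpacked that cited computation (refreshment term vanishes, transport and bounce terms cancel via the involution/norm-preservation/sign-flip properties of $\mathfrak R_x$).
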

\begin{proof}
See Supplementary Material \ref{proof:bps_invariance}.
\end{proof}
For high-dimensional problems, the BPS will without refreshments remain on a single contour of $U$ and another contour of $\log p(v)$ since $\Vert v \Vert = c$ for some $ c > 0$, in fact, the limiting dynamics correspond to those of the randomized Hamiltonian Monte Carlo algorithm, see \cite{deligiannidis2018randomized}. To circumvent this issue, it is in general necessary to exploit some factor structure of the posterior potential:
\[
U(x) = \sum_{i} U_i(x), 
\]
where each $U_i$ can depend on any number of dimensions of $x$. The BPS extends easily to this case, as the individual factors can run as fully local bouncy particle samplers. With subscript $i$ on variables denoting restriction to the components in factor $i$, the event-times are now determined by local rate functions $\rho^i_{BPS}(t) = \max \{ 0, \langle v_i, \nabla U_i(x_i + v_i\cdot t) \rangle \}$, and the actual reflection event time is, as it is for the ZZS, just the minimum over all event times, $\tau_r = \min_{i} \tau_i$. Accordingly, the reflection operator for each factor instead only uses the the gradient of the reflecting factor, and only updates the corresponding subset $v_i$ of the velocity vector via 
\[
\mathfrak R_x^i v_i = v_i - 2 \frac{\langle v_i, \nabla U_i(x_i) \rangle}{\Vert \nabla U_i(x_i) \Vert^2 }\nabla U_i(x_i).
\]
We denote the factorized version the \emph{local BPS} (LBPS), in contrast with the standard global version. In the supplement, Section \ref{supp:hamiltonian}, we include an extended version with Hamiltonian dynamics \cite{vanetti2017piecewise}.
\subsubsection{Simulation}
Since the flow of the basic PDMP models is trivial to calculate, the computational burden of simulation is on the generation of the arrival times of the iPP(s) that determines the changes in velocity. Given that the velocity component is either Gaussian or uniform, the burden entirely depends on having tight bounds on the gradient. 
By the Markov property the process renews at each event, so we only need to consider the generation of the first event time $\tau_1$. For some arbitrary initial $(x_0,v_0)$, define $\varrho(t) = \int_0^{t} \rho(x_0 + s \cdot v_0)ds$, where $\rho$ is the rate function of the chosen PDMP. For an iPP, the probability of no arrivals in $[0, t]$ is then given by
\[
\mathbb P(\tau_1 > t) = \exp \Big \{ -\varrho(t)\Big \} = \exp \Big \{ -\int_0^{t} \rho(x_0 + s\cdot v_0) ds \Big \}.
\]
Sampling of $\tau_1$ can then be carried out by transforming the rate of the iPP to a homogeneous PP of rate 1, simulating an event-time of this process, and transforming the event-time back to the desired rate function via $\tau_1 = \varrho^{-1}(-\log u)$, where $u \sim U(0,1)$. Unfortunately, this iPP version of inversion sampling is only available in simple cases and instead, in the PDMP literature, the thinning method of \cite{lewis1979simulation} is used. To apply thinning, consider a fixed look-ahead $\theta > 0$. On $[0, \theta]$, we need the bound $\bar{\rho} = \max_{t \in [0, \theta]} \rho(x_t, v_t) + \gamma$, for some $\gamma > 0$. For log-concave functions the maximum is attained at the end point of the interval, $\bar{\rho}=\rho(x_t+\theta v_t,v_t)+\gamma$. The next event-time is then just simulated via $\tau \sim \text{Exp}(\bar{\rho})$, and accepted or rejected with probability $\rho(\tau)/\bar{\rho}$. Evaluation of the empirical rejection probability provides a direct measure of the efficiency of the bounding procedure. If $\tau > \theta$, instead update the look-ahead, calculate a new bound and generate a new event-time. Note that while $\theta$ might appear similar to a step-size, it is purely a computational parameter, and does not affect the mixing of the PDMP algorithms.
It is also worth noting that the trajectory length is directly correlated to the runtime, and as such does not contain any new information; yet it may provide a simple surrogate to the effective sample size and can also be helpful in planning the time needed for a simulation for a target Monte Carlo error.

\section{Examples\label{examples}}
In this section we compare the performance of the Langevin-based and the PDMP-based samplers in a number of numerical examples. They all illustrate that PDMPs can be used for \emph{exact} sampling when using non-differentiable priors. We also show that relaxing exactness (via MYE-based methods) does not necessarily lead to enhanced mixing compared to PDMP based approaches.

\subsection{Anisotropic Laplace\label{ex:100D_laplace}}
We consider here an anisotropic distribution inspired by \cite[Example 4.4]{bouchard2018bouncy} on $\mathcal X = \mathbb R^{100}$ given by 
$\pi(x) \propto \exp \left \{ - \beta^T \vert x \vert \right \},$ where $\beta = (1, 2, \ldots, 99, 100)^T$ is a vector of integers, $\vert\cdot\vert$ is applied element-wise, and superscript $T$ indicates transpose. These types of distributions generally prove challenging for MCMC samplers, as it is difficult to find a global stepsize that ensures efficient exploration across all dimensions. We approximate the potential $U(x)= \beta^T \vert x \vert$ by its Moreau-Yosida envelope with $\lambda=10^{-5}$, giving us our smoothed target $\pi^\lambda$ with a very tight envelope.  

We compare the performances of the Moreau-Yosida Unadjusted Langevin Algorithm (MY-ULA, \cite{durmus2018efficient}), our Moreau-Yosida Unadjusted Underdamped Langevin Algorithm (MY-UULA), SK-ROCK as introduced in \cite{vargas2019accelerating}, the proximal Metropolis-adjusted Langevin algorithm (pMALA, \cite{pereyra2016proximal}), the Bouncy Particle Sampler (BPS, \cite{bouchard2018bouncy}), and the Zig-Zag Sampler (ZZS, \cite{bierkens2019zig}). We let the algorithms run for an equivalent amount of wall clock time.

Figure \ref{fig:100D_laplace} shows four relevant marginal empirical densities for the six different samplers. We set the stepsize for MY-ULA to $\delta=\lambda/2$, the stepsize for SK-ROCK to $\delta=10^{-3}$, as in \cite{vargas2019accelerating}. For pMALA, we set $\delta=2\lambda_p$, where $\lambda_p=2\times10^{-5}$ is tuned to achieve an acceptance ratio of around $55\%$.
We observe that MY-ULA mixes very slowly in the most diffused marginal. Note that this is partly due to overdamped Langevin methods struggling with weakly log-concave distributions, see \cite{durmus2019high}, another challenge in this example is the anisotropy: while the algorithm mixes fast for the narrow marginals, it mixes slowly in the wide ones. Choosing a larger stepsize will result in faster mixing in these wide components, however it will also lead to the narrow marginals not being captured properly. 
We also note that the resulting behaviour is not a consequence of the tightness of the envelope: if, instead, $\lambda$ was of the order $10^{-1}$ or $10^{-2}$, corresponding to much less tight envelopes, MY-ULA and MY-UULA suffer similarly. There is in general no way to adjust for anisotropy with methods using a step size outside of preconditioning, which requires extensive knowledge of the target distribution in general.
Moving on to the asymptotically exact algorithms, pMALA shows slightly worse mixing behaviour compared to MY-ULA. 
Neither BPS nor ZZS degenerate in this scenario. The ZZS with its natural localization, provides very accurate estimates of the marginal densities across all dimensions. Due to the distribution factorizing into independent Laplace distributions, the BPS performs a bit worse than the ZZS in estimating marginals, as the latter makes use of the independence structure explicitly while the BPS is implemented naively in its global form. The BPS also requires roughly one refreshment in twenty reflections, or suffers from similar problems as the original BPS on Gaussian targets \cite{bouchard2018bouncy}. We emphasize here that a localized implementation of the BPS is possible in this example, and from our experience in this work gives better results than the global BPS. In Table \ref{table:ESS_Laplace} we see that SK-ROCK decorrelates very quickly for the narrow dimension (recall, however, that it fails to accurately capture the marginal), but for the most difficult wide dimension the ZZS outperforms SK-ROCK by a factor of four while targeting the correct invariant.

\begin{table}[htp]
\begin{center}
    \begin{tabular}{| l | c | c | c | c | c | c |}
    \hline
    Algorithm & MY-ULA & MY-UULA & SK-ROCK & pMALA & BPS & ZZS  \\ \hline\hline
    $\beta = 1$ & 2.0 & 2.3 & 6.0 & 1.7 & 3.0 & 24.9  \\ \hline
    $\beta = 100$ & 50.5 & 218.3 & 4197.4 & 182.9 & 755.5 & 2037.4  \\ 
    \hline
    \end{tabular}
\end{center}
\caption{Effective sample size per second for the fastest and slowest mixing dimensions of the anisotropic Laplace obtained from long runs of the respective algorithms. Recall that the first three algorithms are asymptotically biased, while the last three are asymptotically exact.}
\label{table:ESS_Laplace}
\end{table}

\begin{figure}\begin{center}
	\includegraphics[scale=0.7]{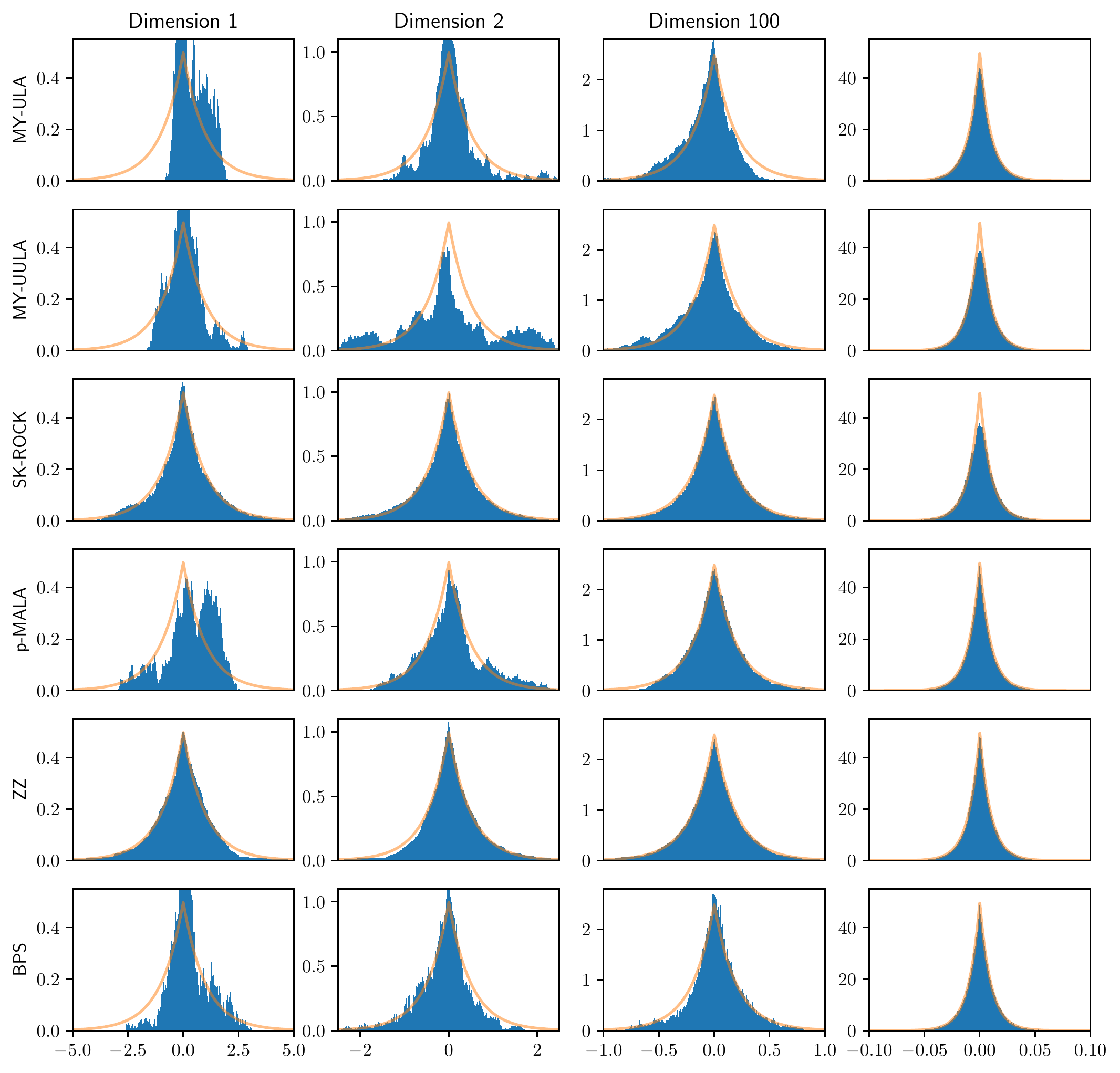}
	\caption{The first three rows correspond to the approximate algorithms, the last three are asymptotically exact. The first four are based on discretizations of the Langevin diffusions, the last two rows correspond to the ZZS and BPS samplers. Each column corresponds to histograms of the first, second, fifth, and hundredth dimension of the anisotropic Laplace, the underlying orange line shows the true marginals. All algorithms were given the same clock time for a fair comparison. After this time expired, some algorithms have not yet mixed well, explaining the difference between the true distribution and the histograms, despite the last three algorithms being asymptotically exact.}
	\label{fig:100D_laplace}
	\end{center}
\end{figure}

\subsection{Sparse Bayesian logistic regression\label{ex:sparsebay_numerics}}
Gene selection via sparse estimation has been subject of significant attention in the medical statistics literature, see for example \cite{shevade2003simple, cawley2006gene}, with a particular emphasis on the application of convex regularizers. The interpretation of frequentist regularizers in generalized linear models as negative log-priors has, however, had limited success: while the posterior mode coincides with the MLE, expected parameter values under a full posterior are significantly non-sparse due to the integral dependence on the posterior shape. To remedy this, scale-mixtures \cite{carvalho2010horseshoe, bhattacharya2015dirichlet, hosseini2019two} that combine significant mass near zero with heavy tails have seen widespread application. The resulting priors, however, are not log-concave, induce multimodality of the posterior, and often only satisfy a.e. differentiability. It is therefore not possible to apply MYE-based Langevin dynamics for this class of models as the Moreau-Yosida envelope is not well-defined due to the fact that the optimization problem \eqref{eq:mye_opt} might have multiple solutions. The resulting posteriors are typically sampled using custom Gibbs samplers, we here show the general applicability of PDMPs to target such distributions. 
In particular, we consider a sparse Bayesian logistic regression model with a Bessel-K prior, see \cite{hosseini2019two} for details on this prior specification. We let $y \in \mathbb \{-1,1\}^d$ be a binary vector, $Z \in \mathbb R^n \times \mathbb R^d$ a matrix of covariates, and $x \in \mathbb R^n$ a parameter vector. For each observation $i$ we model the outcome as a Bernoulli trial with probability of success given by $\text{logistic}\left(Z_i^T x\right)$ where $Z_i$ is the $i$'th column and $\text{logistic}(z) = 1/(1+\exp(-z))$. The Bessel-K prior with parameters $(p, \epsilon)$ is given by $p(x) \propto \vert x + \epsilon \vert^{p-\frac{1}{2}}K_{p-\frac{1}{2}}\left (\vert x \vert + \epsilon \right )$, where $K_\alpha$ is the modified Bessel function of the second kind with order $\alpha$, an illustration is given in Figure \ref{fig:sparse_genes}, top-left. As the factor graph is dense, the BPS and variations are not a suitable choice in this case. The ZZS, however, is more amenable to models that feature a fully connected factor graph. We will use constant bounds, which, while overly conservative, alleviate the dominant computational cost of running the ZZS, calculating local bounds, with an operation of order $O(1)$.  

We will use a dataset derived from \cite{golub1999molecular}, which consists of $n=7129$ gene expressions with associated parameters $x_j, j =1,2,\ldots, 7129,$ of $d=72$ individuals with either acute myeloid or acute lymphoblastic leukaemia. Rather than classification of tumour type, our aim is to discover expressions that significantly contribute to the classification and warrant further scrutiny, and we therefore pool the test and training set. Based on the discussion in \cite{hosseini2019two}, we let $p = 0.002$ and $\epsilon = 0.05$, generated $10^5$ samples and discarded the first $10^4$. Trace plots of parameters and the log-energy indicate the sampler mixed well. \cite{bhattacharya2015dirichlet} applies a sequential 2-means cluster procedure to estimate the number of sparse parameters for a similar problem, we instead post-hoc subset some percentile of the largest absolute posterior means of $x$. If we subset the 0.2\% percent largest values, the 16 resulting expressions results in a perfect prediction on the pooled set, see top right in Figure \ref{fig:sparse_genes}, but include none of the ones found in \cite{cawley2006gene}; the second percentile of the largest absolute components of $x$, 144 in total, include 10 of the 11 gene expressions found by the algorithm of \cite{cawley2006gene} via leave-one-out cross-validation. This indicates that the model successfully recovers novel and relevant genetic expressions not found via standard coordinate descent methods. The low number of observations implies that the effect size is limited, subsequently the 90\% credible intervals all include zero; if the data set was larger we would expect the credible intervals to concentrate more strongly around the contributing genes expressions. We plot the posterior median and means with 90\% credible intervals in Figure \ref{fig:sparse_genes}, bottom. In experiments, not reported here, using a Laplace prior where the number of observations $n$ exceeds the number of parameters $d$, the subsampling property lead to significant outperformance for the ZZS relative to the Langevin-based schemes.
\begin{figure}[ht!]
\begin{center}
    \centering
    \includegraphics[scale=0.5]{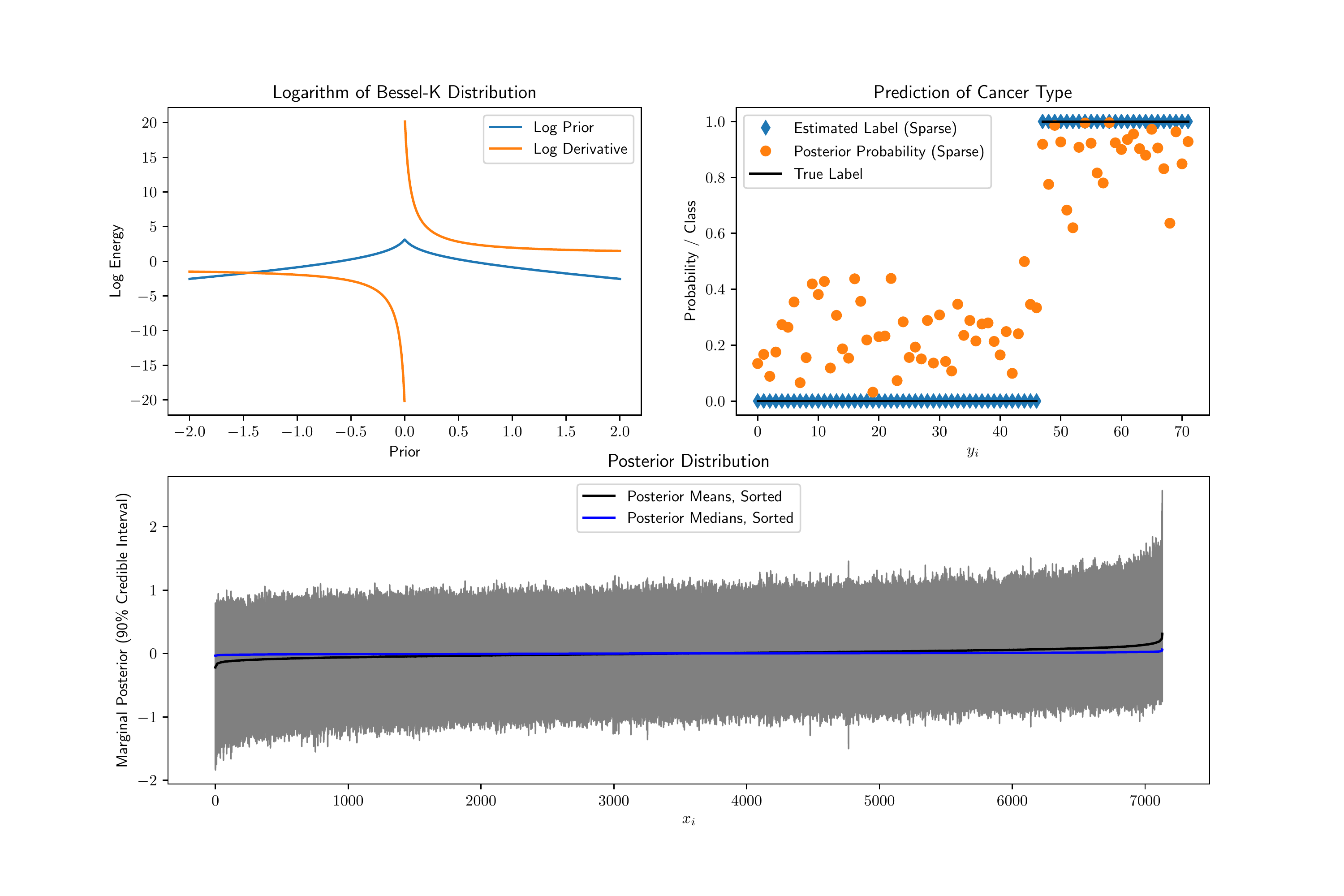}
    \caption{Top left: Logarithm of a Bessel-K prior for a univariate case with $(p,\epsilon) = (0.002, 0.05)$ and corresponding log-derivative. Top right: Prediction probabilities for the pooled data set using only 16 genes. Bottom: Posterior mean, median and 90\% credible intervals. }
    \label{fig:sparse_genes}
\end{center}
\end{figure}

\subsection{Circular Bayesian Statistics\label{ex:ants}}
Circular statistics addresses inference from periodic data such as angles and rotations, for example when analyzing the direction ants move in when responding to an evenly illuminated black target \cite{jander1957optische}, this can be modelled by an asymmetric wrapped Laplace distribution \cite{fernandez2004circular}. This example illustrates the applicability of PDMPs when dealing with circular, non-differentiable, and multi-modal posteriors. Here again, the MYE is not well-defined, precluding the use of the samplers based on the MYE-approximation.

In \cite{jander1957optische}, the author studies which direction ants walk in when being placed in the middle of an evenly illuminated arena with two black discs on the side. These directions were observed for 253 ants, the observations are summarised in Figure \ref{fig:ants}. For the one-disc example, \cite{fernandez2004circular} shows that the wrapped Laplace distribution is a good fit to the observations. For the two-disc model, we thus assume a mixture of two wrapped Laplace distributions with means $\mu_i\in[0,2\pi)$, scale parameters $\lambda_i>0$, and skew parameters $\kappa_i>0$, $i\in\{1,2\}$. The likelihood of a data point $y\in[0,2\pi)$ given the mixture component is calculated by
\begin{align*}
    \theta &= \Bigg\{\begin{array}{lr}
        y-\mu_i, & \text{for } y-\mu_i>0\\
        y-\mu_i+2\pi, & \text{for } y-\mu_i\leq0
        \end{array}\\
    L(\theta|\mu_i,\lambda_i,\kappa_i)&=\frac{\lambda_i\kappa_i}{1+\kappa_i^2}\left(\frac{e^{-\lambda_i\kappa_i\theta}}{1-e^{-2\pi\lambda_i\kappa_i}}+\frac{ e^{(\lambda_i/\kappa_i)\theta}}{e^{2\pi(\lambda_i/\kappa_i)}-1}\right),
\end{align*}
where the definition of the auxiliary variable $\theta$ handles the periodic extension due to the shift by the mean. The prior on the $\mu_i$ are uniform distributions on $[0,2\pi]$, the ones on the scale parameters $\lambda_i$ are Exponential$(1)$ distributions, the one on the $\kappa_i$ are Gamma$(2,1/2)$ distributions, and the one on the mixture parameter $\rho$ is a Beta$(100,100)$ distribution. Figure \ref{fig:ants} shows the posterior distribution for $\mu_1$ as estimated by the BPS. The reader should in particular note the multi-modality of the distribution, which especially the BPS samples from effectively: Table \ref{table:ESS_ants} summarises the effective sample size per second for the BPS, the ZZS, and a Random Walk Metropolis Hastings Sampler. \\

\begin{minipage}{\textwidth}
  \begin{minipage}[b]{0.49\textwidth}
    \centering
    \includegraphics[scale=0.16]{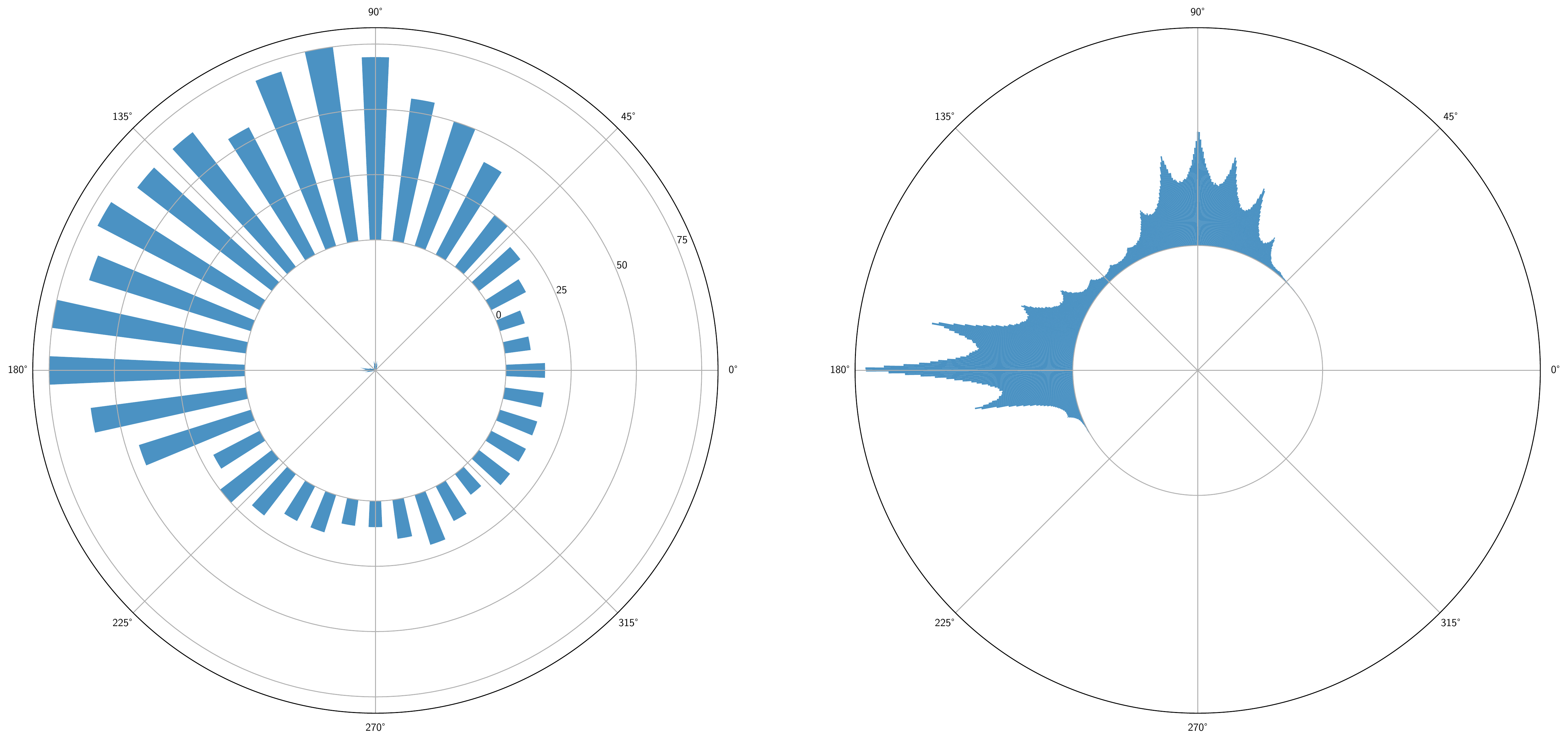}
    \label{fig:ants}
    \captionof{figure}{Left: Observations, taken from \cite[Fig. 18B]{jander1957optische}. Right: Marginal posterior density for the mean parameter $\mu_1$.}
  \end{minipage}
  \hfill
  \begin{minipage}[b]{0.49\textwidth}
    \centering
    \begin{tabular}{| l | c | c | c | c |}
    \hline
    Algorithm & $\mu_1$ & $\lambda_1$ & $\kappa_1$ & $\rho$ \\ \hline\hline
    BPS & 3.36 & 164.80 & 6.29 & 2095.44 \\ \hline
    ZZS & 0.66 & 39.53 & 1.12 & 537.33 \\ \hline
    RWMH & 2.88 & 37.01 & 4.52 & 1153.13 \\ \hline
    \end{tabular}    \label{table:ESS_ants}
     \captionof{table}{ESS/s for different variables. The ESS/s for the variables from the second mixture are similar, as is expected due to the mixture components being indistinguishable from one another.}
    \end{minipage}
\end{minipage}

\section{Discussion\label{section:discussion}}
Both PDMPs and diffusion-based samplers have been shown to work in various examples, albeit differently well. This discussion aims to provide the reader with an understanding of the strengths and weaknesses of the respective methods, and to guide the practitioner as to which algorithm to use. In the following, we discuss the key aspects one needs to consider.

\emph{Dimensionality:} PDMPs perform worse as the dimension increases: The ZZS requires at least $d$ events to completely change the direction of the velocity vector, and for each of these event the gradient in the respective direction needs to be re-evaluated, which (depending on the problem) can become prohibitively expensive. The BPS is known to suffer in high-dimensions too, with problems arising even when targeting isotropic Gaussians \cite{bouchard2018bouncy} requiring many refreshments. If the problem is localizable (such as in Example \ref{ex:image_deblurring}), the local version of the BPS can alleviate these issues by reducing the problem size to multiple smaller problems. An interesting direction for future research would be to parallelize the dynamics of these smaller problems, which would be a strong argument for PDMPs in high-dimensional settings, as discussed in Example \ref{ex:image_deblurring}. The smoothing by the MYE always results in a non-localizable target, such that the computation of the proximal operator can not be broken down into smaller problems. Furthermore, in the large data regime, data subsampling is straightforward for the PDMP samplers.

\emph{Anisotropic targets:} As illustrated in Example \ref{ex:100D_laplace}, and similarly in (the everywhere differentiable) Example \ref{ex:100D_gaussian} in the appendix, especially the ZZS is able to adapt to highly anisotropic targets. The diffusion-based samplers would improve if one has a preconditioner available, but if this is not the case, they struggle: A tight envelope results in small step sizes such that mixing in the `slowly mixing components' takes very long. However, if one chooses a rather crude approximation, the approximation error in the `fast mixing components' grows. This is graphically visible in Figures \ref{fig:100D_laplace} and \ref{fig:100D_gaussian}. 

\emph{Log-concavity:} The MYE is only well-defined for log-concave targets, and as illustrated in Examples \ref{ex:sparsebay_numerics} and \ref{ex:ants}, PDMPs allow targeting these non-differentiable posteriors, while the diffusion-based algorithms rely on the MYE to be well-defined. In weakly log-concave settings such as Example \ref{ex:100D_laplace} when the gradient of the target is not Lipschitz continuous, a very tight envelope is needed to ensure a good approximation: This however, results in small step sizes and thus slow mixing. Furthermore, the samplers based on the overdamped Langevin diffusion struggle in this setting as the gradients do not grow when $|x|$ diverges. In strongly log-concave targets, such as Examples \ref{ex:checkerboard} and \ref{ex:image_deblurring}, the tuning guidance in \cite{vargas2019accelerating} and \cite{durmus2018efficient} ensures reasonable approximations while at the same time facilitating fast mixing. In summary, for strongly log-concave targets the diffusion-based algorithms are often preferable, while PDMPs are  a viable option when the MYE is not well-defined.

\emph{Exact Sampling:} The PDMPs are inherently asymptotically exact samplers, while the unadjusted diffusion-based samplers are not. However, it is possible to incorporate a Metropolis-Hastings correction step in MY-ULA to account for this, giving the proximal Metropolis-Adjusted Langevin Algorithm (pMALA, \cite{pereyra2016proximal}). pMALA is recommended to be tuned such that one achieves around $50\%$ - $70\%$ acceptances, which may result in slower mixing in comparison to the unadjusted algorithms. 

\emph{Proximal Operators and Event Rates:} The proximal operators for some functions are available in closed form (see e.g. \cite{polson2015proximal}), however often the minimization problems needs to be solved numerically, in which case the evaluation becomes expensive. On the other end, the PDMPs will perform significantly worse if one can't find good bound on the event rates, as in that case one needs to evaluate the gradients more often.

\section{Conclusion\label{conclusion}}
We have in this work shown that sampling algorithms based on PDMPs allow exact sampling from a range of non-differentiable target distributions. In particular, we exploit gradients if they exist almost everywhere, which is a common scenario in many real-world applications using non-differentiable priors. This has particular relevance in cases like cancer tumor classification \cite{golub1999molecular} where accuracy is at a premium, and in situations where the proximal operator is prohibitively expensive to calculate. In comparison to gradient-based methods, PDMPs can naturally handle anisotropy of the posterior without preconditioning, and furthermore be localized whenever the distribution is of product-form. In contrast, the unadjusted diffusion algorithms perform well in cases where low accuracy of the envelope is acceptable and when the posterior exhibits strong log-concavity. Utilizing second-order information in the MY-UULA algorithm can prove beneficial to mixing. Furthermore, we have illustrated efficient sampling of non-convex posteriors where Moreau-Yosida methods are not applicable. In conclusion, we have shown that PDMPs are a very able tool whenever accurate posterior inference is required in complex non-differentiable scenarios, and discussed which distribution characteristics suggest one or the other class of algorithms to be preferable.

\printbibliography

@article{bhattacharya2015dirichlet,
  title={Dirichlet--Laplace Priors for Optimal Shrinkage},
  author={Bhattacharya, Anirban and Pati, Debdeep and Pillai, Natesh S and Dunson, David B},
  journal={Journal of the American Statistical Association},
  volume={110},
  number={512},
  pages={1479--1490},
  year={2015},
  publisher={Taylor \& Francis}
}

@article{carvalho2010horseshoe,
  title={The Horseshoe Estimator for Sparse Signals},
  author={Carvalho, Carlos M and Polson, Nicholas G and Scott, James G},
  journal={Biometrika},
  volume={97},
  number={2},
  pages={465--480},
  year={2010},
  publisher={Oxford University Press}
}

@article{golub1999molecular,
  title={Molecular Classification of Cancer: Class Discovery and Class Prediction by Gene Expression Monitoring},
  author={Golub, Todd R and Slonim, Donna K and Tamayo, Pablo and Huard, Christine and Gaasenbeek, Michelle and Mesirov, Jill P and Coller, Hilary and Loh, Mignon L and Downing, James R and Caligiuri, Mark A and others},
  journal={science},
  volume={286},
  number={5439},
  pages={531--537},
  year={1999},
  publisher={American Association for the Advancement of Science}
}

@article{hosseini2019two,
  title={Two Metropolis--Hastings Algorithms for Posterior Measures with Non-Gaussian Priors in Infinite Dimensions},
  author={Hosseini, Bamdad},
  journal={SIAM/ASA Journal on Uncertainty Quantification},
  volume={7},
  number={4},
  pages={1185--1223},
  year={2019},
  publisher={SIAM}
}

@article{shevade2003simple,
  title={A Simple and Efficient Algorithm for Gene Selection Using Sparse Logistic Regression},
  author={Shevade, Shirish Krishnaj and Keerthi, S Sathiya},
  journal={Bioinformatics},
  volume={19},
  number={17},
  pages={2246--2253},
  year={2003},
  publisher={Oxford University Press}
}

@article{babacan2008variational,
  title={Variational Bayesian Blind Deconvolution Using a Total Variation Prior},
  author={Babacan, S Derin and Molina, Rafael and Katsaggelos, Aggelos K},
  journal={IEEE Transactions on Image Processing},
  volume={18},
  number={1},
  pages={12--26},
  year={2008},
  publisher={IEEE}
}

@article{babacan2009bayesian,
  title={Bayesian Compressive Sensing Using Laplace Priors},
  author={Babacan, S Derin and Molina, Rafael and Katsaggelos, Aggelos K},
  journal={IEEE Transactions on Image Processing},
  volume={19},
  number={1},
  pages={53--63},
  year={2009},
  publisher={IEEE}
}

@article{babacan2012sparse,
  title={Sparse Bayesian Methods for Low-Rank Matrix Estimation},
  author={Babacan, S Derin and Luessi, Martin and Molina, Rafael and Katsaggelos, Aggelos K},
  journal={IEEE Transactions on Signal Processing},
  volume={60},
  number={8},
  pages={3964--3977},
  year={2012},
  publisher={IEEE}
}

@book{bauschke2011convex,
  title={Convex Analysis and Monotone Operator Theory in Hilbert Spaces, Second Edition},
  author={Bauschke, Heinz H and Combettes, Patrick L and others},
  volume={408},
  year={2011},
  publisher={Springer}
}

@book{boyd2004convex,
  title={Convex Optimization},
  author={Boyd, Stephen and Vandenberghe, Lieven},
  year={2004},
  publisher={Cambridge university press}
}

@book{hosseini2019nonsmooth,
  title={Nonsmooth Optimization and Its Applications},
  author={Hosseini, Seyedehsomayeh and Mordukhovich, Boris S and Uschmajew, André},
  year={2019},
  publisher={Springer}
}

@article{beck2009fast,
  title={Fast Gradient-Based Algorithms for Constrained Total Variation Image Denoising and Deblurring Problems},
  author={Beck, Amir and Teboulle, Marc},
  journal={IEEE transactions on image processing},
  volume={18},
  number={11},
  pages={2419--2434},
  year={2009},
  publisher={IEEE}
}

@article{bierkens2019zig,
	title={The Zig-Zag Process and Super-Efficient Sampling for Bayesian Analysis of Big Data},
	author={Bierkens, Joris and Fearnhead, Paul and Roberts, Gareth and others},
	journal={The Annals of Statistics},
	volume={47},
	number={3},
	pages={1288--1320},
	year={2019},
	publisher={Institute of Mathematical Statistics}
}

@article{bou2017randomized,
  title={Randomized Hamiltonian Monte Carlo},
  author={Bou-Rabee, Nawaf and Sanz-Serna, Jesús María and others},
  journal={The Annals of Applied Probability},
  volume={27},
  number={4},
  pages={2159--2194},
  year={2017},
  publisher={Institute of Mathematical Statistics}
}

@article{bouchard2018bouncy,
	title={The Bouncy Particle Sampler: A Nonreversible Rejection-Free Markov Chain Monte Carlo Method},
	author={Bouchard-Côté, Alexandre and Vollmer, Sebastian J and Doucet, Arnaud},
	journal={Journal of the American Statistical Association},
	volume={113},
	number={522},
	pages={855--867},
	year={2018},
	publisher={Taylor \& Francis}
}

@article{cawley2006gene,
  title={Gene Selection in Cancer Classification Using Sparse Logistic Regression with Bayesian Regularization},
  author={Cawley, Gavin C and Talbot, Nicola LC},
  journal={Bioinformatics},
  volume={22},
  number={19},
  pages={2348--2355},
  year={2006},
  publisher={Oxford University Press}
}

@article{candes2006robust,
  title={Robust Uncertainty Principles: Exact Signal Reconstruction from Highly Incomplete Frequency Information},
  author={Candès, Emmanuel J and Romberg, Justin and Tao, Terence},
  journal={IEEE Transactions on information theory},
  volume={52},
  number={2},
  pages={489--509},
  year={2006},
  publisher={IEEE}
}

@article{chaari2016hamiltonian,
  title={A Hamiltonian Monte Carlo Method for Non-Smooth Energy Sampling},
  author={Chaari, Lotfi and Tourneret, Jean-Yves and Chaux, Caroline and Batatia, Hadj},
  journal={IEEE Transactions on Signal Processing},
  volume={64},
  number={21},
  pages={5585--5594},
  year={2016},
  publisher={IEEE}
}

@article{chambolle2004algorithm,
  title={An Algorithm for Total Variation Minimization and Applications},
  author={Chambolle, Antonin},
  journal={Journal of Mathematical imaging and vision},
  volume={20},
  number={1-2},
  pages={89--97},
  year={2004},
  publisher={Springer}
}

@article{chambolle2010introduction,
  title={An Introduction to Total Variation for Image Analysis},
  author={Chambolle, Antonin and Caselles, Vicent and Cremers, Daniel and Novaga, Matteo and Pock, Thomas},
  journal={Theoretical foundations and numerical methods for sparse recovery},
  volume={9},
  number={263-340},
  pages={227},
  year={2010},
  publisher={de Gruyter}
}

@article{chambolle2011first,
  title={A First-Order Primal-Dual Algorithm for Convex Problems with Applications to Imaging},
  author={Chambolle, Antonin and Pock, Thomas},
  journal={Journal of mathematical imaging and vision},
  volume={40},
  number={1},
  pages={120--145},
  year={2011},
  publisher={Springer}
}

@inproceedings{chaari2013sparse,
  title={Sparse Bayesian Regularization Using Bernoulli-Laplacian Priors},
  author={Chaari, Lotfi and Tourneret, Jean-Yves and Batatia, Hadj},
  booktitle={21st European Signal Processing Conference (EUSIPCO 2013)},
  pages={1--5},
  year={2013},
  organization={IEEE}
}

@article{cheng2017underdamped,
  title={Underdamped Langevin MCMC: A Non-Asymptotic Analysis},
  author={Cheng, Xiang and Chatterji, Niladri S and Bartlett, Peter L and Jordan, Michael I},
  journal={arXiv preprint arXiv:1707.03663},
  year={2017}
}

@article{dalalyan2017theoretical,
  title={Theoretical Guarantees for Approximate Sampling from Smooth and Log-Concave Densities},
  author={Dalalyan, Arnak S},
  journal={Journal of the Royal Statistical Society: Series B (Statistical Methodology)},
  volume={79},
  number={3},
  pages={651--676},
  year={2017},
  publisher={Wiley Online Library}
}

@article{davis1984piecewise,
  title={Piecewise-Deterministic Markov Processes: A General Class of Non-Diffusion Stochastic Models},
  author={Davis, Mark HA},
  journal={Journal of the Royal Statistical Society. Series B (Methodological)},
  pages={353--388},
  year={1984},
  publisher={JSTOR}
}

@article{deligiannidis2018randomized,
  title={Randomized Hamiltonian Monte Carlo as Scaling Limit of the Bouncy Particle Sampler and Dimension-Free Convergence Rates},
  author={Deligiannidis, George and Paulin, Daniel and Bouchard-Côté, Alexandre and Doucet, Arnaud},
  journal={arXiv preprint arXiv:1808.04299},
  year={2018}
}

@article{deligiannidis2019exponential,
  title={Exponential Ergodicity of the Bouncy Particle Sampler},
  author={Deligiannidis, George and Bouchard-Côté, Alexandre and Doucet, Arnaud and others},
  journal={The Annals of Statistics},
  volume={47},
  number={3},
  pages={1268--1287},
  year={2019},
  publisher={Institute of Mathematical Statistics}
}

@article{douglas1956numerical,
  title={On the Numerical Solution of Heat Conduction Problems in Two and Three Space Variables},
  author={Douglas, Jim and Rachford, Henry H},
  journal={Transactions of the American mathematical Society},
  volume={82},
  number={2},
  pages={421--439},
  year={1956},
  publisher={JSTOR}
}

@article{durmus2018efficient,
  title={Efficient Bayesian Computation by Proximal Markov Chain Monte Carlo: When Langevin Meets Moreau},
  author={Durmus, Alain and Moulines, Eric and Pereyra, Marcelo},
  journal={SIAM Journal on Imaging Sciences},
  volume={11},
  number={1},
  pages={473--506},
  year={2018},
  publisher={SIAM}
}

@article{durmus2019high,
  title={High-Dimensional Bayesian Inference via the Unadjusted Langevin Algorithm},
  author={Durmus, Alain and Moulines, Eric and others},
  journal={Bernoulli},
  volume={25},
  number={4A},
  pages={2854--2882},
  year={2019},
  publisher={Bernoulli Society for Mathematical Statistics and Probability}
}

@book{ethier2009markov,
  title={Markov Processes: Characterization and Convergence},
  author={Ethier, Stewart N and Kurtz, Thomas G},
  volume={282},
  year={2009},
  publisher={John Wiley \& Sons}
}

@article{farsiu2004fast,
  title={Fast and Robust Multiframe Super Resolution},
  author={Farsiu, Sina and Robinson, M Dirk and Elad, Michael and Milanfar, Peyman},
  journal={IEEE transactions on image processing},
  volume={13},
  number={10},
  pages={1327--1344},
  year={2004},
  publisher={IEEE}
}

@article{fernandez2004circular,
  title={Circular Distributions Based on Nonnegative Trigonometric Sums},
  author={Fern{\'a}ndez-Dur{\'a}n, Juan J},
  journal={Biometrics},
  volume={60},
  number={2},
  pages={499--503},
  year={2004},
  publisher={Wiley Online Library}
}

@book{jacobsen2006point,
	title={Point Process Theory and Applications: Marked Point and Piecewise Deterministic Processes},
	author={Jacobsen, Martin},
	year={2006},
	publisher={Springer Science \& Business Media}
}

@article{jander1957optische,
  title={Die Optische Richtungsorientierung der Roten Waldameise (Formica Ruea l.)},
  author={Jander, Rudolf},
  journal={Zeitschrift f{\"u}r vergleichende Physiologie},
  volume={40},
  number={2},
  pages={162--238},
  year={1957},
  publisher={Springer}
}

@article{koltchinskii2011nuclear,
  title={Nuclear-Norm Penalization and Optimal Rates for Noisy Low-Rank Matrix Completion},
  author={Koltchinskii, Vladimir and Lounici, Karim and Tsybakov, Alexandre B and others},
  journal={The Annals of Statistics},
  volume={39},
  number={5},
  pages={2302--2329},
  year={2011},
  publisher={Institute of Mathematical Statistics}
}

@article{koch2015review,
  title={A Review on Computer Vision Based Defect Detection and Condition Assessment of Concrete and Asphalt Civil Infrastructure},
  author={Koch, Christian and Georgieva, Kristina and Kasireddy, Varun and Akinci, Burcu and Fieguth, Paul},
  journal={Advanced Engineering Informatics},
  volume={29},
  number={2},
  pages={196--210},
  year={2015},
  publisher={Elsevier}
}

@article{le1995blount,
  title={Blount's Disease: Magnetic Resonance Imaging},
  author={le Pointe, H Ducou and Mousselard, H and Rudelli, A and Montagne, J-P and Filipe, G},
  journal={Pediatric radiology},
  volume={25},
  number={1},
  pages={12--14},
  year={1995},
  publisher={Springer}
}

@article{lewis1979simulation,
	title={Simulation of Nonhomogeneous Poisson Processes by Thinning},
	author={Lewis, PA W and Shedler, Gerald S},
	journal={Naval research logistics quarterly},
	volume={26},
	number={3},
	pages={403--413},
	year={1979},
	publisher={Wiley Online Library}
}

@article{livingstone2019robustness,
  title={On the Robustness of Gradient-Based MCMC Algorithms},
  author={Livingstone, Samuel and Zanella, Giacomo},
  journal={arXiv preprint arXiv:1908.11812},
  year={2019}
}

@article{ma2019there,
  title={Is There an Analog of Nesterov Acceleration for MCMC?},
  author={Ma, Yi-An and Chatterji, Niladri and Cheng, Xiang and Flammarion, Nicolas and Bartlett, Peter and Jordan, Michael I},
  journal={arXiv preprint arXiv:1902.00996},
  year={2019}
}

@article{neal2004improving,
  title={Improving Asymptotic Variance of MCMC Estimators: Non-Reversible Chains Are Better},
  author={Neal, Radford M},
  journal={arXiv preprint math/0407281},
  year={2004}
}

@inproceedings{nitanda2014stochastic,
  title={Stochastic Proximal Gradient Descent with Acceleration Techniques},
  author={Nitanda, Atsushi},
  booktitle={Advances in Neural Information Processing Systems},
  pages={1574--1582},
  year={2014}
}

@article{parikh2014proximal,
  title={Proximal Algorithms},
  author={Parikh, Neal and Boyd, Stephen and others},
  journal={Foundations and Trends{\textregistered} in Optimization},
  volume={1},
  number={3},
  pages={127--239},
  year={2014},
  publisher={Now Publishers, Inc.}
}

@article{park2008bayesian,
  title={The Bayesian Lasso},
  author={Park, Trevor and Casella, George},
  journal={Journal of the American Statistical Association},
  volume={103},
  number={482},
  pages={681--686},
  year={2008},
  publisher={Taylor \& Francis}
}

@article{pereyra2016proximal,
  title={Proximal Markov Chain Monte Carlo Algorithms},
  author={Pereyra, Marcelo},
  journal={Statistics and Computing},
  volume={26},
  number={4},
  pages={745--760},
  year={2016},
  publisher={Springer}
}

@article{peters2012rejection,
  title={Rejection-Free Monte Carlo Sampling for General Potentials},
  author={Peters, EA and de With, G},
  journal={Physical review. E, Statistical, nonlinear, and soft matter physics},
  volume={85},
  number={2 Pt 2},
  pages={026703},
  year={2012}
}

@article{polson2015proximal,
  title={Proximal Algorithms in Statistics and Machine Learning},
  author={Polson, Nicholas G and Scott, James G and Willard, Brandon T and others},
  journal={Statistical Science},
  volume={30},
  number={4},
  pages={559--581},
  year={2015},
  publisher={Institute of Mathematical Statistics}
}

@article{roberts1996exponential,
  title={Exponential Convergence of Langevin Distributions and their discrete Approximations},
  author={Roberts, Gareth O and Tweedie, Richard L and others},
  journal={Bernoulli},
  volume={2},
  number={4},
  pages={341--363},
  year={1996},
  publisher={Bernoulli Society for Mathematical Statistics and Probability}
}

@article{roberts1998optimal,
  title={Optimal Scaling of Discrete Approximations to Langevin Diffusions},
  author={Roberts, Gareth O and Rosenthal, Jeffrey S},
  journal={Journal of the Royal Statistical Society: Series B (Statistical Methodology)},
  volume={60},
  number={1},
  pages={255--268},
  year={1998},
  publisher={Wiley Online Library}
}

@article{rudin1992nonlinear,
  title={Nonlinear Total Variation Based Noise Removal Algorithms},
  author={Rudin, Leonid I and Osher, Stanley and Fatemi, Emad},
  journal={Physica D: nonlinear phenomena},
  volume={60},
  number={1-4},
  pages={259--268},
  year={1992},
  publisher={Elsevier}
}

@article{salim2019stochastic,
  title={Stochastic Proximal Langevin Algorithm: Potential Splitting and Nonasymptotic Rates},
  author={Salim, Adil and Kovalev, Dmitry and Richtárik, Peter},
  journal={arXiv preprint arXiv:1905.11768},
  year={2019}
}

@article{tibshirani1996regression,
  title={Regression Shrinkage and Selection via the Lasso},
  author={Tibshirani, Robert},
  journal={Journal of the Royal Statistical Society: Series B (Methodological)},
  volume={58},
  number={1},
  pages={267--288},
  year={1996},
  publisher={Wiley Online Library}
}

@article{vargas2019accelerating, %pereyra2020accelerating,
  title={Accelerating Proximal Markov Chain Monte Carlo by Using an Explicit Stabilized Method},
  author={Pereyra, Marcelo and Mieles, Luis Vargas and Zygalakis, Konstantinos C},
  journal={SIAM Journal on Imaging Sciences},
  volume={13},
  number={2},
  pages={905--935},
  year={2020},
  publisher={SIAM}
}

@article{vanetti2017piecewise,
  title={Piecewise-Deterministic Markov Chain Monte Carlo},
  author={Vanetti, Paul and Bouchard-Côté, Alexandre and Deligiannidis, George and Doucet, Arnaud},
  journal={arXiv preprint arXiv:1707.05296},
  year={2017}
}

@inproceedings{chen2014stochastic,
  title={Stochastic Gradient Hamiltonian Monte Carlo},
  author={Chen, Tianqi and Fox, Emily and Guestrin, Carlos},
  booktitle={International conference on machine learning},
  pages={1683--1691},
  year={2014}
}

@inproceedings{welling2011bayesian,
  title={Bayesian Learning via Stochastic Gradient Langevin Dynamics},
  author={Welling, Max and Teh, Yee W},
  booktitle={Proceedings of the 28th international conference on machine learning (ICML-11)},
  pages={681--688},
  year={2011}
}

@article{wibisono2019proximal,
  title={Proximal Langevin Algorithm: Rapid Convergence Under Isoperimetry},
  author={Wibisono, Andre},
  journal={arXiv preprint arXiv:1911.01469},
  year={2019}
}

\newpage
\appendix

\section{Supplementary Material - Proofs\label{appendixA}}
\subsection{Proof of Theorem \ref{thm_expectation_bounds}\label{proof:thm_expectation_bounds}}

\begin{proof}[Proof of Theorem \ref{thm_expectation_bounds}]
From Theorem \ref{thm_MYE_closeness} we immediately get the inequalities
\begin{align}
    -g(x)&\leq-g^\lambda(x)\label{ineq1}\\
    -g^\lambda(x)&\leq-g(x)+\frac{L^2\lambda}{2}\label{ineq2},
\end{align}
and thus also
\begin{align}
    \int\exp(-g^\lambda(z))dz&\geq\int\exp(-g(z))dz=1\label{ineq3}
\end{align}
and
\begin{align}
    \int\exp(-g^\lambda(z))dz&\leq\int\exp(-g(z))\exp(L^2\lambda/2)dz=\exp(L^2\lambda/2)\label{ineq5}.
\end{align}

Let $f\geq0$, then
\begin{align*}
    \mathbb E_{\pi^\lambda}(f)&= \int f(x)\frac{\exp(-g^\lambda(x))}{\int\exp(-g^\lambda(z))dz}dx\\
    &\overset{(\ref{ineq3})}{\leq} \int f(x)\exp(-g^\lambda(x))dx\\
    &\overset{(\ref{ineq2})}{\leq} \exp(L^2\lambda/2)\int f(x)\exp(-g(x))dx =\exp(L^2\lambda/2)\mathbb E_\pi(f).
\end{align*}
Similarly, again for $f\geq0$,
\begin{align*}
    \mathbb E_{\pi^\lambda}(f)&= \int f(x)\frac{\exp(-g^\lambda(x))}{\int\exp(-g^\lambda(z))dz}dx\\
    &\overset{(\ref{ineq5})}{\geq} \exp(-L^2\lambda/2)\int f(x)\exp(-g^\lambda(x))dx\\
    &\overset{(\ref{ineq1})}{\geq} \exp(-L^2\lambda/2)\int f(x)\exp(-g(x))dx\\
    &= \exp(-L^2\lambda/2)\mathbb E_\pi(f).
\end{align*}
In summary, for any non-negative $f$, we have
\begin{align}\label{important_ineq}
    \exp(-L^2\lambda/2)\mathbb E_\pi(f)\leq\mathbb E_{\pi^\lambda}(f)\leq\exp(L^2\lambda/2)\mathbb E_\pi(f).
\end{align}

Subtracting $\mathbb E_\pi(f)\geq0$ from these inequalities lets us derive
\begin{equation}\label{proof_ineq:twobounds}\begin{split}
    -(\exp(L^2\lambda/2)-1)\mathbb E_\pi(f)
    &=-\max\{\exp(L^2\lambda/2)-1,1-\exp(-L^2\lambda/2)\}\mathbb E_\pi(f)\\
    &=\min\{1-\exp(L^2\lambda/2),\exp(-L^2\lambda/2)-1\}\mathbb E_\pi(f)\\
    &\leq(\exp(-L^2\lambda/2)-1)\mathbb E_\pi(f)\\
    &\overset{(\ref{important_ineq})}{\leq}\mathbb E_{\pi^\lambda}(f)-\mathbb E_\pi(f)\\
    &\overset{(\ref{important_ineq})}{\leq}(\exp(L^2\lambda/2)-1)\mathbb E_\pi(f)\\
    &\leq\max\{\exp(L^2\lambda/2)-1,1-\exp(-L^2\lambda/2)\}\mathbb E_\pi(f)\\
    &=(\exp(L^2\lambda/2)-1)\mathbb E_\pi(f),
\end{split}
\end{equation}
and therefore
\begin{align}\label{important_ineq2}
    |\mathbb E_{\pi^\lambda}(f)-\mathbb E_\pi(f)|\leq(\exp(L^2\lambda)-1)\mathbb E_\pi(f)
\end{align}
holds for any non-negative $f$.

For general $f$, we consider the standard decomposition $f=f^+-f^-$ with $f^+\geq0$ and $f^-\geq0$. Then $|f|=f^++f^-$, and as
\begin{align*}
    |\mathbb E_{\pi^\lambda}(f)-\mathbb E_\pi(f)|&=|\mathbb E_{\pi^\lambda}(f^+)-\mathbb E_\pi(f^+)-[\mathbb E_{\pi^\lambda}(f^-)-\mathbb E_\pi(f^-)]|\\
    &\leq |\mathbb E_{\pi^\lambda}(f^+)-\mathbb E_\pi(f^+)|+|\mathbb E_{\pi^\lambda}(f^-)-\mathbb E_\pi(f^-)|\\
    &\overset{(\ref{important_ineq2})}{\leq}(\exp(L^2\lambda)-1)\mathbb E_\pi(f^+)+(\exp(L^2\lambda)-1)\mathbb E_\pi(f^-)\\
    &=(\exp(L^2\lambda)-1)\mathbb E_\pi(|f|),
\end{align*}
we have proved the first bound in the Theorem.

Since we can exchange the roles of $\pi$ and $\pi_\lambda$ in (\ref{important_ineq}), we can follow the same chain of arguments to also get
\begin{align*}
|\mathbb E_{\pi^\lambda}(f)-\mathbb E_\pi(f)|\leq(\exp(L^2\lambda/2)-1)\mathbb E_{\pi^\lambda}(|f|).
\end{align*}
If $g=g_1+g_2$ with Lipschitz-continuous $g_1$ and differentiable, but not necessarily Lipschitz-continuous, $g_2$, one takes the MYE of $g_1$ and notes that \ref{ineq1} and \ref{ineq2} hold for $g_1$. Adding $g_2$ on both sides of the inequality shows that these inequalities remain true for $g$ such that the proof still holds.
\end{proof}

\subsection{Proof of Lemma \ref{thm_gradient_bounds}\label{proof:thm_gradient_bounds}}

\begin{proof}
The case $\lambda_1=\lambda_2$ is trivial so assume $\lambda_1<\lambda_2$.\\
Firstly recall that for convex $g$ any MYE is also convex. Further note that $g^{\lambda_2}$ is a Moreau-Yosida envelope for $g^{\lambda_1}$, with $g^{\lambda_2}=(g^{\lambda_1})^{\lambda_2-\lambda_1}$ \cite[Proposition 12.22 (ii)]{bauschke2011convex}.\\
We may thus define $h=g^{\lambda_1}$, $\lambda=\lambda_2-\lambda_1$, such that the statement of the lemma is equivalent to
\begin{lemma*}[Equivalent Formulation of Lemma \ref{thm_gradient_bounds}]
For any convex and differentiable function $h:\mathcal X\rightarrow ]-\infty,\infty]$, and for any $\lambda>0$, the Moreau-Yosida envelope $h^\lambda$ satisfies
\begin{align*}
    \lVert \nabla h(x)\rVert\geq\lVert\nabla h^{\lambda}(x)\rVert\quad\forall x\in\mathcal X.
\end{align*}
\end{lemma*}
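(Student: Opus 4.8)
The plan is to prove the equivalent formulation directly. I fix a convex, differentiable $h$ and $\lambda>0$, and abbreviate the proximal point by $p=\text{prox}_h^\lambda(x)$. The first step is to rewrite both gradients in terms of $p$. By Theorem \ref{theorem:log_gradient_prox} the envelope gradient is $\nabla h^\lambda(x)=\frac1\lambda(x-p)$, while the proximal optimality condition \eqref{prox_op_1}, which applies since $h\in C^1$, gives $x-p=\lambda\nabla h(p)$. Combining these two identities yields $\nabla h^\lambda(x)=\nabla h(p)$, so the claim reduces to showing $\lVert\nabla h(x)\rVert\geq\lVert\nabla h(p)\rVert$; in words, a single backward (proximal) step cannot increase the norm of the gradient.

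The key tool for the second step is monotonicity of the gradient of a convex function, namely $\langle\nabla h(x)-\nabla h(p),\,x-p\rangle\geq0$. Substituting the identity $x-p=\lambda\nabla h(p)$ into this inequality and dividing by $\lambda>0$ gives
\begin{align*}
\langle\nabla h(x),\nabla h(p)\rangle\geq\lVert\nabla h(p)\rVert^2.
\end{align*}
A single application of Cauchy--Schwarz, $\langle\nabla h(x),\nabla h(p)\rangle\leq\lVert\nabla h(x)\rVert\,\lVert\nabla h(p)\rVert$, then produces $\lVert\nabla h(x)\rVert\,\lVert\nabla h(p)\rVert\geq\lVert\nabla h(p)\rVert^2$, and cancelling one factor of $\lVert\nabla h(p)\rVert$ delivers exactly $\lVert\nabla h(x)\rVert\geq\lVert\nabla h(p)\rVert=\lVert\nabla h^\lambda(x)\rVert$.

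The only delicate point — which I would flag as the main (and rather minor) obstacle — is that this cancellation is legitimate only when $\lVert\nabla h(p)\rVert>0$. I would therefore treat the degenerate case $\nabla h(p)=0$ separately, where the inequality $\lVert\nabla h(x)\rVert\geq0=\lVert\nabla h^\lambda(x)\rVert$ holds trivially. Since the reduction of the original Lemma to this equivalent formulation (via the semigroup identity $g^{\lambda_2}=(g^{\lambda_1})^{\lambda_2-\lambda_1}$ and the differentiability of $g^{\lambda_1}$) has already been carried out, establishing the displayed inequality above completes the proof.
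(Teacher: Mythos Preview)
Your proof is correct and follows essentially the same route as the paper: both start from the monotonicity inequality $\langle\nabla h(x)-\nabla h(p),x-p\rangle\geq0$ with the substitution $x-p=\lambda\nabla h(p)$ to obtain $\langle\nabla h(x),\nabla h(p)\rangle\geq\lVert\nabla h(p)\rVert^2$. The only (cosmetic) difference is the final step: you apply Cauchy--Schwarz and cancel a factor of $\lVert\nabla h(p)\rVert$, whereas the paper applies Young's inequality $\langle a,b\rangle\leq\tfrac12\lVert a\rVert^2+\tfrac12\lVert b\rVert^2$ to reach $\lVert\nabla h(x)\rVert^2\geq\lVert\nabla h(p)\rVert^2$ directly, which has the minor advantage of not requiring the degenerate case $\nabla h(p)=0$ to be treated separately.
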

We define $p=\text{prox}_h^\lambda(x)$. By theorem \ref{theorem:log_gradient_prox}, $\nabla h^\lambda(x)=(x-p)/\lambda$; and by convexity (and differentiability) of $h$, we have for any $x\in\mathcal X$:
\begin{align*}
    0&\leq\langle\nabla h(p)-\nabla h(x),p-x\rangle\\
    &=\langle\nabla h(p)-\nabla h(x),-\lambda\nabla h(p)\rangle\\
    &=-\lambda\lVert\nabla h(p)\rVert^2+\langle\nabla h(x),\nabla h(p)\rangle\\
    &\leq-\lambda\lVert\nabla h(p)\rVert^2+\frac\lambda2\lVert\nabla h(x)\rVert^2+\frac\lambda2\lVert\nabla h(p)\rVert^2\\
    &=\frac\lambda2\lVert\nabla h(x)\rVert^2-\frac\lambda2\lVert\nabla h(p)\rVert^2,
\end{align*}
where the first inequality is a necessary and sufficient condition for convexity of a differentiable function, and the last inequality follows from Young's inequality as $\langle x,y\rangle\leq\lVert x\rVert^2/2+\lVert y\rVert^2/2$. 
\begin{align*}
    \lVert\nabla h(x)\rVert^2&\geq\lVert\nabla h(p)\rVert^2 \overset{(\ref{prox_op_1})}{=}\lVert\frac1\lambda(x-p)\rVert =\lVert\nabla h^{\lambda}(x)\rVert
\end{align*}
as required. The last equality is given by Theorem \ref{theorem:log_gradient_prox}.
\end{proof}

\subsection{Proof of Lemma \ref{lemma:zz_invariance}\label{proof:zz_invariance}}
\begin{proof}
Invariance follows if 
\[
\int \mathcal L_{ZZ} f(x,v) \pi(dx) p(dv) = \int_{A_0} \mathcal L_{ZZ} f(x,v) \pi(dx) p(dv) + \int_{A_0^c} \mathcal L_{ZZ} f(x,v) \pi(dx) p(dv) = 0 
\]
for any $f \in D(\mathcal L_{ZZ})$, the domain of $\mathcal L_{ZZ}$ \cite{ethier2009markov}. If the prior is differentiable such that $\pi$ has a differentiable density, $A_0$ is empty and the proof is directly as in \cite[Theorem 2.2]{bierkens2019zig}. If the prior is non-differentiable, $A_0$ is non-empty but a null-set under $n$-dimensional Lebesgue measure. Since $\pi$ and $p(dv)$ are absolutely continuous with respect to Lebesgue measure, it follows that the first integral is zero. 
Invariance then again follows directly from \cite[Theorem 2.2]{bierkens2019zig}.
\end{proof}

\subsection{Proof of Lemma \ref{lemma:bps_invariance}\label{proof:bps_invariance}}
\begin{proof}
As in Lemma \ref{lemma:zz_invariance}, invariance with respect to the joint distribution of $(x,v)$ follows if 
\[
\int \mathcal L_{BPS} f(x,v) \pi(dx) p(dv) = \int_{A_0} \mathcal L_{BPS} f(x,v) \pi(dx) p(dv) + \int_{A_0^c} \mathcal L_{BPS} f(x,v) \pi(dx) p(dv) = 0
\]
for any $f \in D(\mathcal L_{BPS})$. Similarly to the proof of Lemma \ref{lemma:zz_invariance}, the proof of \cite[Proposition 1]{bouchard2018bouncy} applies directly under absolute continuity of $\pi$ and $p(v)$ with respect to Lebesgue measure.
\end{proof}

\subsection{Discretizing the Underdamped Langevin Dynamics\label{appendix:discretisation}}
We implement the discretization used in \cite{ma2019there}. If the current position and velocity are $(x_t,v_t)$, the next iteration is given by
\begin{align*}
    \begin{cases}
x_{t+1}=x_t+\frac{1-\beta}{\gamma}v_t-\frac{1}{\gamma}(\nu-\frac{1-\beta}{\gamma\xi})\nabla U^\lambda(x_t)+W_x\\
v_{t+1}=\beta v_t-\frac{1-\beta}{\gamma\xi}\nabla U^\lambda(x_t)+W_v,
\end{cases}
\end{align*}
where $\nu=t_{n+1}-t_n$ is the step size, $\beta=\exp(-\gamma\xi\nu)$, and $(W_x,W_v)\sim\mathcal N(0,\Sigma)$ is Gaussian noise with covariance
\begin{align*}
    \Sigma=\begin{pmatrix} 
\frac1\gamma\left(2\nu-\frac{3}{\gamma\xi}+\frac{4\beta}{\gamma\xi}-\frac{\beta^2}{\gamma\xi}\right)I_{d\times d} & \frac{1+\beta^2-2\beta}{\gamma\xi}I_{d\times d} \\
\frac{1+\beta^2-2\beta}{\gamma\xi}I_{d\times d} & \frac{1-\beta^2}{\xi}I_{d\times d}
\end{pmatrix}.
\end{align*}
All the experiments in this paper were run with $\gamma=2$, $L=1\lambda$, and $\nu=2\lambda$, where $\lambda$ is the tightness parameter of the respective MYE.

\subsection{Hamiltonian Bouncy Particle Sampler}\label{supp:hamiltonian}
An alternative specification for the dynamics of the BPS was introduced in \cite{vanetti2017piecewise}, which we will now detail. Consider the Hamiltonian of both the target variable and the velocity $H(x, v)$
\begin{align*}
    H(x,v) = U(x) + \log p(v) = -\ell(x) - \log \pi_0 (x) - \frac{1}{2} v^t v + c,
\end{align*}
where $c$ is some constant we will suppress from now on. For some spherical potential $V(x) = \frac{1}{2} (x-\mu)^t \Sigma^{-1} (x-\mu)$, consider now the augmented Hamiltonian
\begin{align*}
    H(x, v) = \underbrace{-\ell(x)  - \log \pi_0 - V(x)}_{\hat U(x)} + \underbrace{V(x) - \frac{1}{2} v^t}_{\hat{H}(x,v)},
\end{align*}
which naturally can be broken into two parts. For the new system consisting of the latter two terms, the dynamics of the Hamiltonian $\hat H$ are available in closed form since the system of ODEs 
\begin{align*}
    \frac{\partial v_t}{\partial t} &= -\nabla_{x_t} \hat{H}(x_t, v_t) = -\Sigma^{-1}(x_t-\mu) \\
    \frac{\partial x_t}{\partial t} &=  \nabla_{v_t} \hat{H}(x_t, v_t) = v_t
\end{align*}
can be solved explicitly for any $\mu$ and $\Sigma$. For the first three terms, we note that if the model under consideration has a Gaussian component in $x$ then the spherical potential can be chosen to equal this energy function. For example, if $\pi_0$ is Gaussian, setting $V(x) = -\log \pi_0(x)$ reduces the Hamiltonian to only depend on the likelihood. As shown in \cite{vanetti2017piecewise}, the resulting Hamiltonian BPS with rates and reflection operator given by
\begin{align*}
    \hat{\rho}(t) &= \max \{0, \langle v_t, \nabla \hat U(x_t) \rangle \} \\
    \hat{\mathfrak{R}}_x v &= v - 2 \frac{\langle v, \nabla \hat U(x) \rangle}{\Vert \nabla \hat U(x) \Vert^2 }\nabla \hat U(x),
\end{align*}
and flow determined by $\hat H(x,v)$, has $\pi(x) v(x)$ as invariant distribution. It is clear that this is valid under any choice of $\mu$ and $\Sigma$, in particular, the Hamiltonian BPS can be localized if a factor decomposition is explicitly available. 
\section{Supplementary Material - Further Experiments\label{appendixB}}

\subsection{Anisotropic Gaussian\label{ex:100D_gaussian}}
To assess how the different algorithms compare on a \emph{strongly} log-concave example, we repeated Example \ref{ex:100D_laplace} with a centered Gaussian distribution, as in \cite[Example 4.4]{bouchard2018bouncy}. The $100$-dimensional distribution has a diagonal covariance matrix with $\Sigma_{i,i}=1/i^2$. Following the guidance in \cite{durmus2018efficient}, we picked $\lambda=1/10^4$, as this is the Lipschitz constant of the log-gradient, and chose a step size $\delta/2=\lambda$ for MY-ULA, and $\delta=0.005$ for SK-ROCK. For pMALA, we set $\delta=2\lambda$, and then chose $\lambda=3\times10^{-5}$, giving an acceptance probability of around $60\%$. The results are summarised in Figure \ref{fig:100D_gaussian}. The BPS is again run in its global form, a localized version thereof would improve performance. Estimates of the effective sample size per second are summarised in Table \ref{table:ESS_Gaussian}.

\begin{table}[htp]
\begin{center}
    \begin{tabular}{| l | c | c | c | c | c | c |}
    \hline
    Algorithm & MY-ULA & MY-UULA & SK-ROCK & pMALA & BPS & ZZS  \\ \hline\hline
    $\beta = 1$ & 2.00 & 1.39 & 2.28 & 1.27 & 1.17 & 2.48  \\ \hline
    $\beta = 100$ & 1774.74 & 1257.58 & 195.86 & 551.61 & 809.93 & 312.29  \\ 
    \hline
    \end{tabular}
\end{center}
\caption{Effective sample size per second for the different algorithm when targeting an anisotropic Gaussian distribution. Recall that the first three algorithms are asymptotically biased, while the last three are asymptotically exact.}
\label{table:ESS_Gaussian}
\end{table}

\begin{figure}\begin{center}
	\includegraphics[scale=0.7]{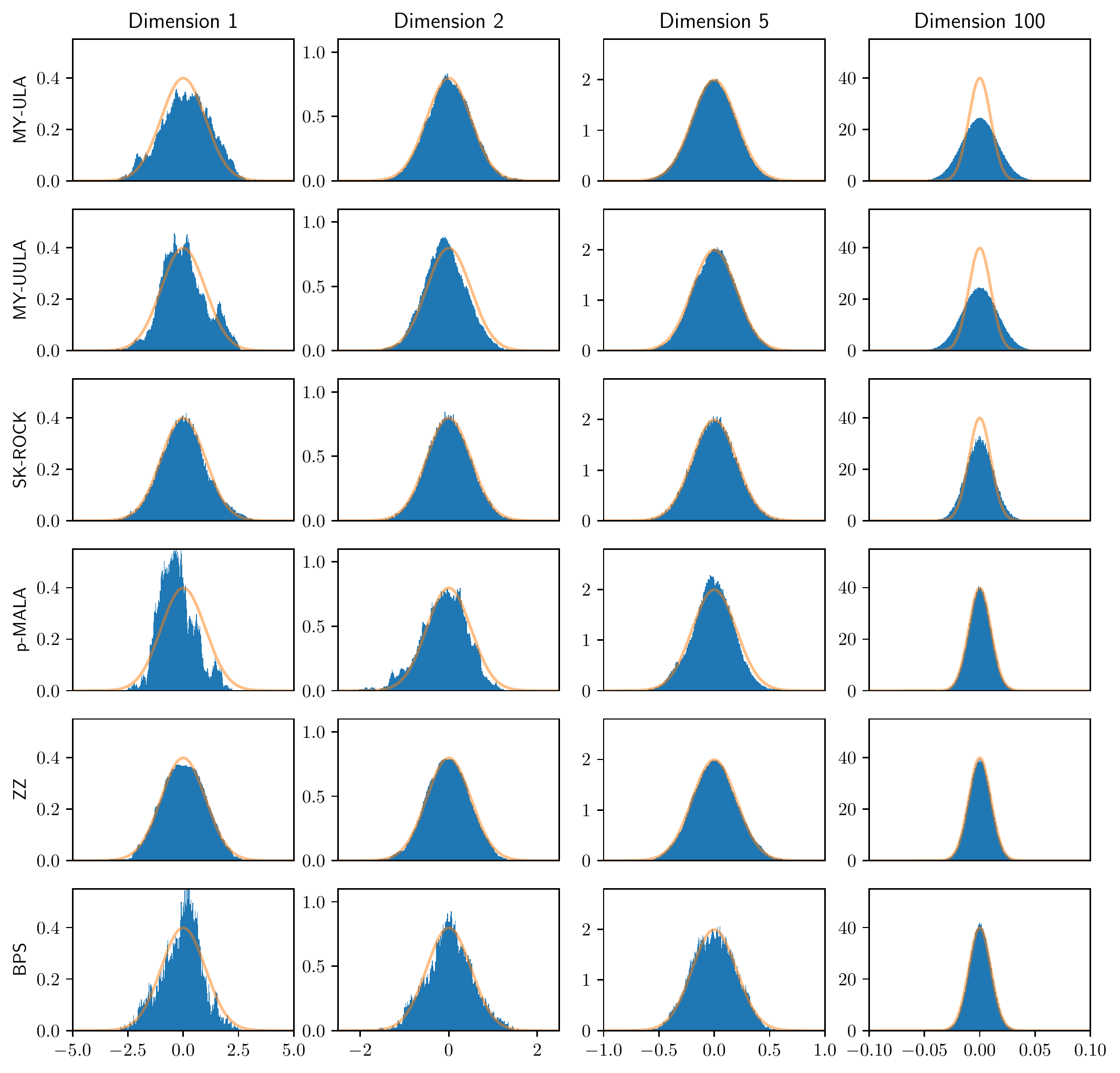}
	\caption{All algorithms are targeting a $100$-dimensional anisotropic Gaussian distribution. The first three rows correspond to the approximate algorithms, and none of them manage to fully capture the narrowest component. The ZZS perfectly captures the last component, and shows good results in the first component. The BPS (in its global form) mixes slowly in the first component, but well in the last. All algorithms were given the same computational budget for a fair comparison.}
	\label{fig:100D_gaussian}
	\end{center}
\end{figure}

\subsection{Nuclear-norm models for low-rank matrix estimation\label{ex:checkerboard}}
As a final illustration of our methods performance in exact sampling, we consider a nuclear-norm model example taken from \cite{pereyra2016proximal}. Let $x\in\mathbb R^{n\times n}$ be an unknown low-rank matrix, and let observations be noisy measurements thereof: $y=x+\xi$, where the entries of $\xi$ are i.i.d. $N(0,\sigma^2)$. We assume that $x$ is a low-rank matrix, and our aim is to sample from the posterior distribution of $x$ given by
\begin{align}
    \pi(x)\propto\exp\left(-\frac{1}{2\sigma^2}\lVert x-y\rVert_F-\alpha\lVert x\rVert_*\right),
\end{align}
where $\lVert\cdot\rVert_F$ denotes the Frobenius norm and $\lVert\cdot\rVert_*$ denotes the nuclear norm which favors low-rank matrices and penalizes high-rank ones. Conveniently, the proximal operator of the nuclear norm is available in closed form: Let $x=Q\Sigma V^T$ be the singular value decomposition of $x$, with $\Sigma=\text{diag}(\sigma_1,\dots,\sigma_n)$. Then the proximal operator is given by
\begin{align*}
    \text{prox}^\lambda_{\alpha\lVert\cdot\rVert_*}(x)=Q\text{diag}\left(\text{sgn}(\sigma_1)\max(|\sigma_1|-\alpha\lambda, 0),\dots,\text{sgn}(\sigma_n)\max(|\sigma_n|-\alpha\lambda,0)\right)V^T,
\end{align*}
i.e., one applies the soft thresholding operator to the singular values of $x$. We can thus efficiently compute the gradient to use in the Langevin-based samplers,
\begin{align}
    \nabla U^\lambda(x) = \frac{1}{\sigma^2}(x-y)+\frac{1}{\lambda} \Big (x - \text{prox}_{\alpha\lVert\cdot\rVert_*}^\lambda(x) \Big ).
\end{align}
We generated $y$ by adding Gaussian noise to a matrix $x^{\text{true}}$ with entries $x_{i,j}^{\text{true}}\in \{0,0.7,1\}$. The matrix $x^{\text{true}}$ is visually a checkerboard with white, grey, or black checks.

We set $\lambda=\sigma^2$. The step size for MY-ULA is set to $\delta=2\lambda$. A particular issue for the BPS in this model is the lack of factor decomposition due both to non-linearity of the nuclear norm and the proximal operator, which prevents us from using a localized, and therefore faster, version of the BPS. In an attempt to mitigate the resulting debilitated dynamics, we note that the likelihood in this case is equivalent to a isotropic Gaussian distribution in $x$ as well. Defining an auxiliary potential by $V(x\vert y) = \Vert x-y \Vert^2$/2, we propose to generate dynamics according to the Hamiltonian flow (see \ref{supp:hamiltonian}) corresponding to $(\dot{x}, \dot v) = (v_t, -(x_t-y)/{\sigma^2})$, which has the explicit solution
\begin{align*}
    \begin{pmatrix}
    x_t \\
    v_t 
    \end{pmatrix} = 
    \begin{pmatrix}
    v_0 \sin\left (\frac{t}{\sigma} \right ) \sigma + (x_0-y) \cos \left ( \frac{t}{\sigma} \right ) + y \\
    -(x_0-y) \sin\left (\frac{t}{\sigma} \right ) + v_0 \cos \left ( \frac{t}{\sigma} \right )
    \end{pmatrix}.
\end{align*}
By this choice of $V$ it follows that the gradient employed in the rate and reflection operator subsequently is
\begin{align}
    \nabla\hat U^\lambda(x)=\frac{1}{\lambda}\Big(x-\text{prox}_{\alpha\lVert\cdot\rVert_*}^\lambda(x)\Big).
\end{align}

Figure \ref{fig:CB_results} shows the mean squared error between the posterior mean estimate of the respective algorithms, as calculated every second, and the `true' posterior mean, as estimated by a very long run using an asymptotically unbiased algorithm. All algorithms are started at the same point, not too far away from the region of high probability. One can see that while MY-ULA quickly gives good estimates, the second-order scheme MY-UULA quickly yields better estimates. Interestingly, SK-ROCK performs worse here. The BPS does not yield any useful estimates in reasonable time, but after a while the HBPS gives the second best results. For completeness, we note that the Zig-Zag Sampler is not able to computationally compete with any of the other methods, as a single reflection requires the evaluation of the full gradient, which is prohibitively expensive. We also estimated the slowest and fastest mixing components of the checkerboard by estimating the sample covariance matrix during a long run of an exact sampler, and taking the first and last eigenvector thereof as the direction where the chain mixes slowest, and fastest, respectively. The autocorrelation plots for these components are shown in the second and third panel of Figure \ref{fig:CB_results}.

\begin{minipage}{\textwidth}
  \begin{minipage}[b]{0.3\textwidth}
    \centering
    \includegraphics[scale=0.4]{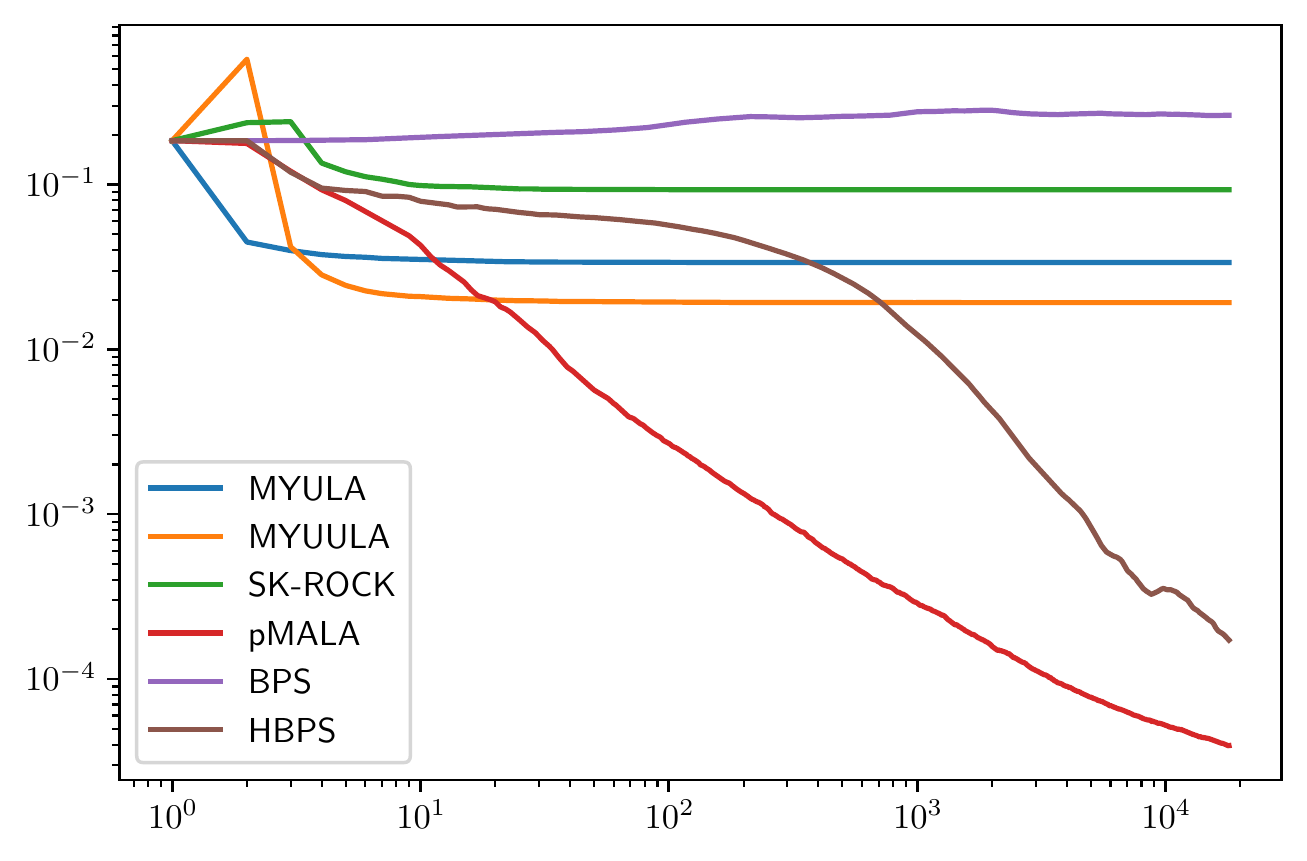}
  \end{minipage}
  \hfill  
  \begin{minipage}[b]{0.3\textwidth}
    \centering
    \includegraphics[scale=0.4]{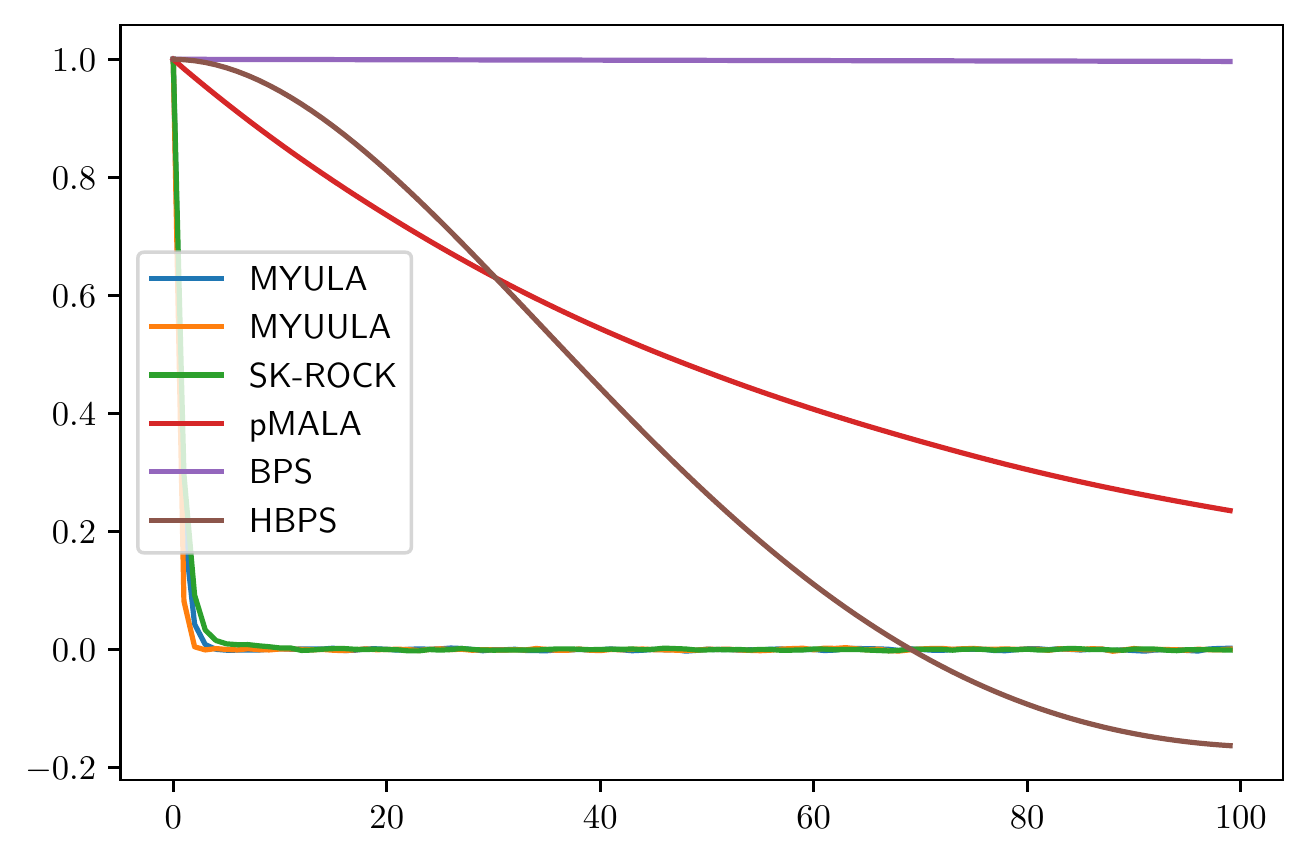}
  \end{minipage}
  \hfill
  \begin{minipage}[b]{0.3\textwidth}
    \centering
    \includegraphics[scale=0.4]{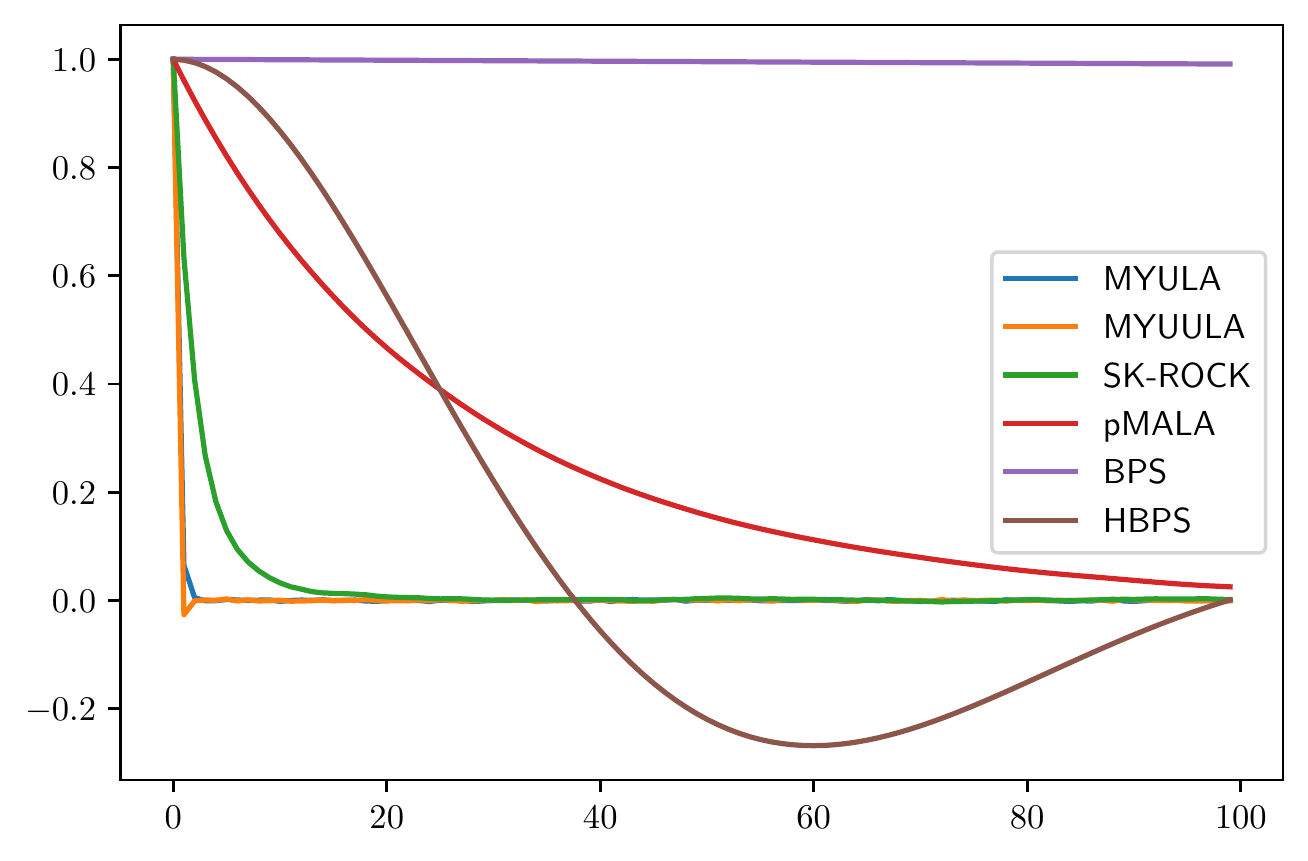}
  \end{minipage}
  \captionof{figure}{Results from the nuclear norm example. Left: MSE over time, for the different algorithms, run for half an hour each, on a log-log-scale. Middle: Autocorrelation for the slowest component, sample number adjusted for a fair comparison. Right: Autocorrelation for the fastest component, sample number adjusted for a fair comparison.\label{fig:CB_results}}
\end{minipage}

\subsection{Image Deblurring\label{ex:image_deblurring}}
Uncertainty quantification in images is generally a challenging computational problem, with samples from the posterior used to estimate credible intervals or provide model comparisons. We focus on a purely illustrative example involving the total variation prior similar to \cite[Example 4.1.2]{durmus2018efficient}. Let $x\in\mathbb R^{n_1\times n_2}$ be an image which we observe through $y=Hx+\xi$, where $H$ is a blurring operator that blurs a pixel $x_{i,j}$ uniformly with its closest neighbours ($5\times5$ patch), and $\xi\sim N(0,\sigma^2I_{n_1\times n_2})$. The log-prior is proportional to $-TV(x)=-\alpha\lVert \nabla_Dx\rVert_1$, where $\nabla_D$ is the two-dimensional discrete gradient operator as defined in \cite{chambolle2004algorithm}, and $\alpha$ is a fixed parameter. The application of the TV prior is common in a wide array of imaging applications, as it emphasizes smooth surfaces bounded by distinct edges. As the authors of \cite{durmus2018efficient} we chose the $256\times256$ "boat" test image, and set $\alpha=0.03$, $\sigma=0.47$. The posterior is given by
\begin{align}\label{ID_post}
    \pi(x)\propto \exp\left(-\frac{1}{2\sigma^2}\lVert Hx-y\rVert^2_2-\alpha TV(x)\right).
\end{align}
The TV-prior decomposes into a sum where each entry only depends on neighboring points; the uniform blur operator is similarly local. This implies in combination that the posterior can be factorized at granularities defined by the user,  and we can therefore apply the local BPS. We stress that the global BPS struggles in high dimensions \cite{deligiannidis2018randomized}, and thus localization is necessary for it to be a competitive algorithm in these settings. The proximal operator is not available in closed form for the TV-prior, and hence requires evaluation via numerical schemes such as the Douglas-Rachford algorithm introduced in \cite{douglas1956numerical} or the Chambolle-Pock algorithm \cite{chambolle2011first}. While these algorithms in general are efficient, they slow down significantly as the precision of the envelope is increased.

\begin{figure}
\begin{center}
	\includegraphics[scale=0.5]{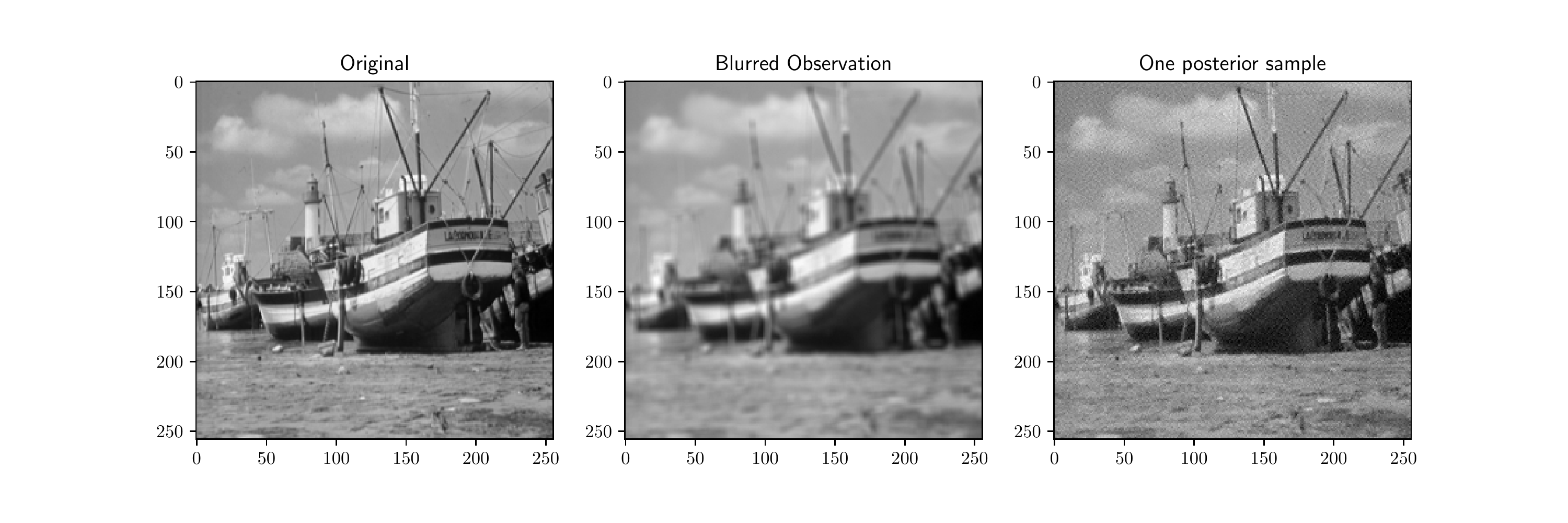}
	\caption{Left: The original $256\times256$ image. Center: The image after the application of the uniform blur operator. Right: A representative sample from the posterior distribution given in equation (\ref{ID_post}), obtained using the LBPS.}
	\label{fig:ID_images}
\end{center}
\end{figure} 

We compare the performances of the LBPS, the ZZS, pMALA, MY-ULA, MY-UULA, and SK-ROCK. For both the LBPS and the ZZS we estimated bounds on the prior- and likelihood-gradients, and used these constant bounds to generate computationally cheap events, avoiding any global evaluations of the gradient. For pMALA, we set $\lambda=2\delta=0.006$, giving us an acceptance ratio of $67\%$. For the last three samplers, we chose $\lambda=0.45$ following the guidance in \cite{durmus2018efficient}. The goal is to sample from the posterior distribution when observing a blurred image, see Figure \ref{fig:ID_images}. Figure \ref{fig:ID_results} shows the mean squared error (MSE) and the structural similarity index (SSIM) between the mean estimates of the various algorithms and the `true' mean, as estimated by a long run of an asymptotically exact algorithm. Notably, unlike MY-(U)ULA, pMALA, and SK-ROCK, which require the evaluation of the proximal operator (which is not localizable), the LBPS and ZZS can be sped up using parallelization techniques: the implementation we used applied global rates to avoid recalculating the full posterior gradient after every event, but one may calculate the factor gradients at hardly any extra computational cost if one calculates them in parallel.

\begin{minipage}{\textwidth}
  \begin{minipage}[b]{0.49\textwidth}
    \centering
    \includegraphics[scale=0.5]{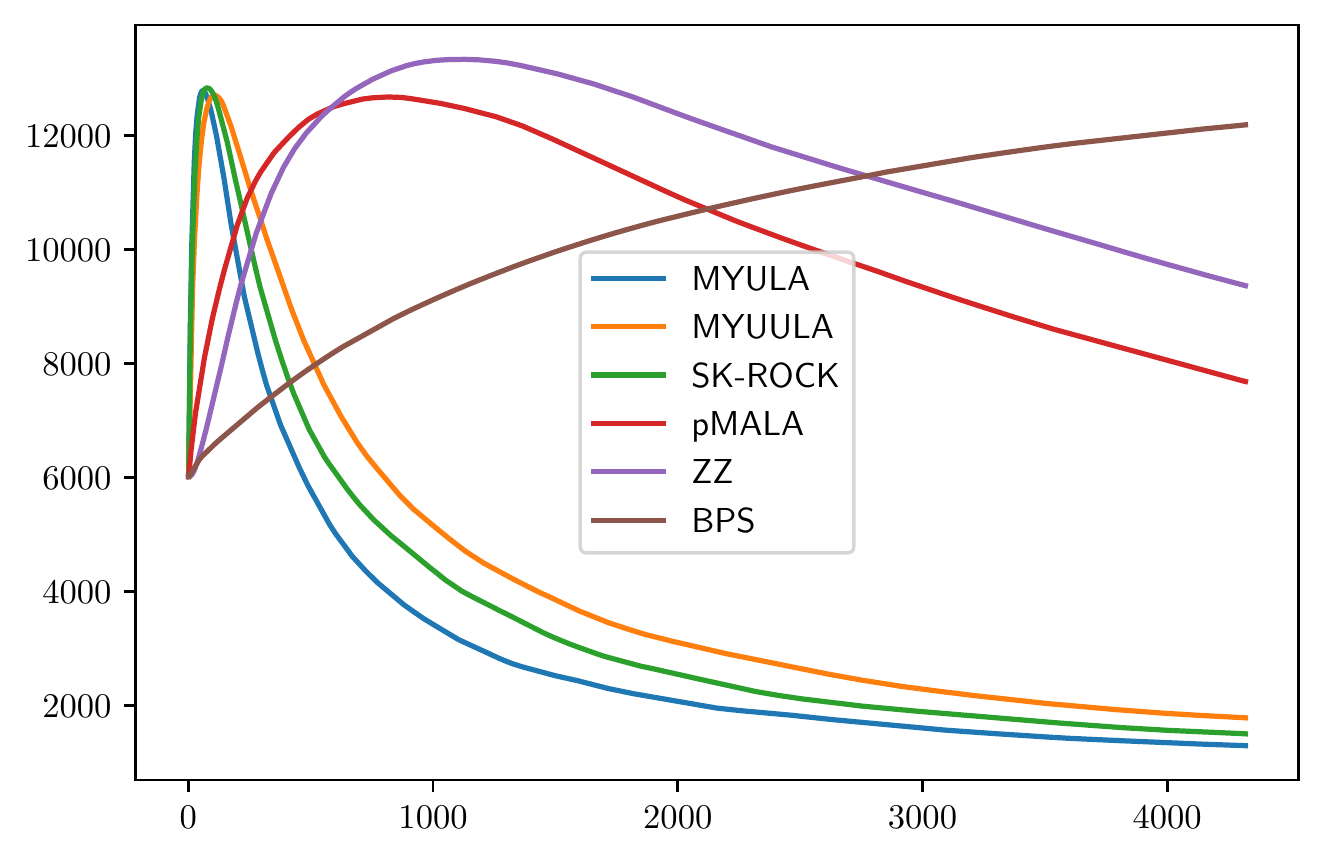}
  \end{minipage}
  \hfill
  \begin{minipage}[b]{0.49\textwidth}
    \centering
    \includegraphics[scale=0.5]{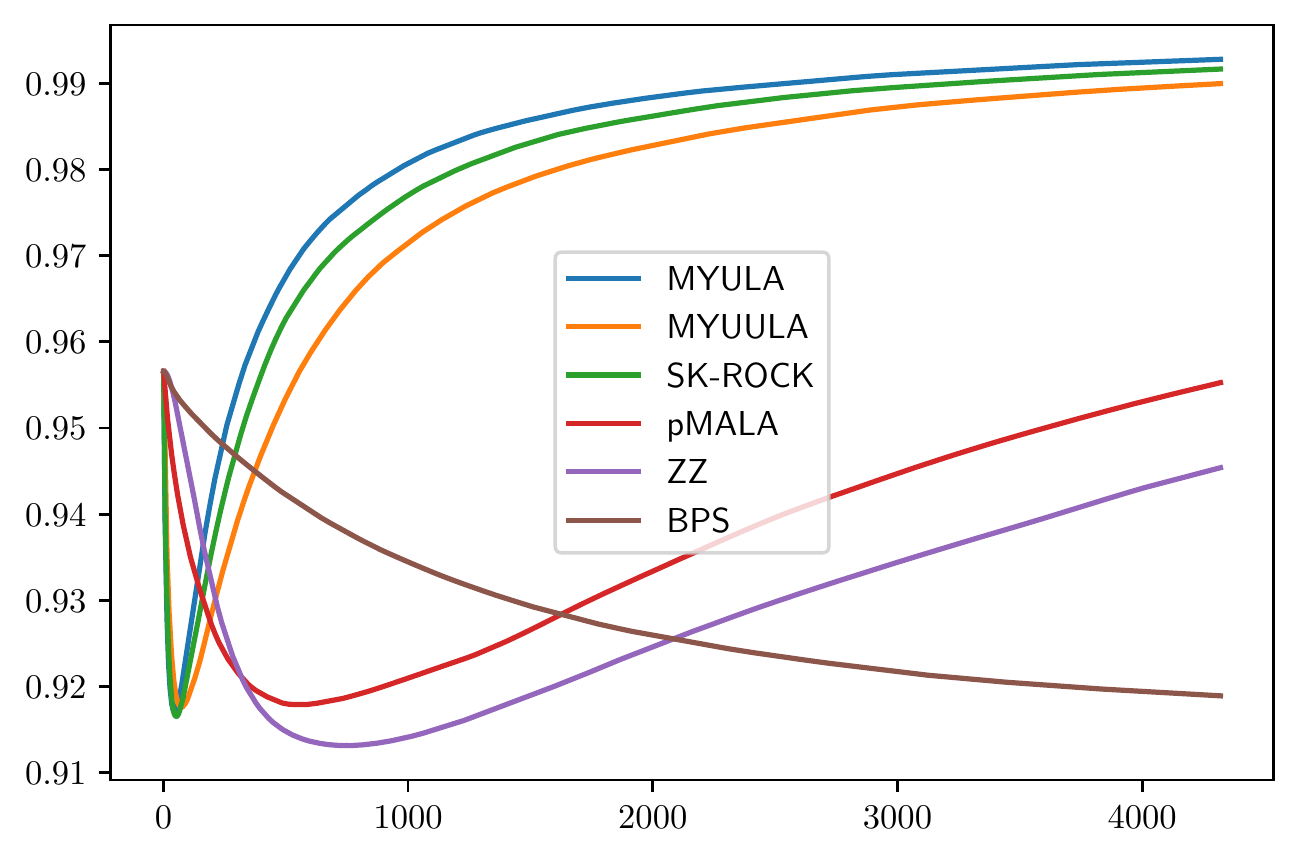}
    \end{minipage}
	\captionof{figure}{Results from the Image Deblurring example. Left: The MSE of the mean estimates, estimated every $10$ seconds. Right: The SSIM of the mean estimates, estimated every $10$ seconds.\label{fig:ID_results}}
\end{minipage}

\end{document}